\documentclass[11pt]{article}
\usepackage{multibib}
\usepackage{float}
\usepackage[export]{adjustbox}
\usepackage{graphicx}
\graphicspath{ {./images/} }
\usepackage{subcaption}
\usepackage[utf8]{inputenc}
\usepackage[export]{adjustbox}
\usepackage{wrapfig}
\usepackage{authblk}
\usepackage[top=27mm, bottom=27mm, left=22mm, right=22mm]{geometry}
\usepackage{amsthm,amsmath,amssymb}
\usepackage{mathtools}
\usepackage{mathrsfs}
\usepackage{multirow}
\usepackage{microtype}
\usepackage{hyperref}
\usepackage[capitalize]{cleveref}
\usepackage{xcolor}
\usepackage{verbatim}
\usepackage{indentfirst}
\usepackage[noadjust]{cite}
\usepackage{ulem}

\usepackage{pgfplots}
\usepgfplotslibrary{groupplots}
\pgfplotsset{compat=1.18}
\usepackage{tikz}
\usepackage{float}

\newtheorem{theorem}{Theorem}[section]
\newtheorem{lemma}[theorem]{Lemma}
\newtheorem{proposition}[theorem]{Proposition}

\newtheorem{conjecture}[theorem]{Conjecture}

\newtheorem{observation}[theorem]{Observation}

\newcommand{\wt}{w_{\mathrm{H}}}
\newcommand{\supp}{\operatorname{supp}}

\title{Improved Rate-versus-Distance Upper Bounds for LDPC Codes}

\author{Chong Shangguan\thanks{C. Shangguan is with Research Center for Mathematics and Interdisciplinary Sciences, Shandong University, Qingdao 266237, China, and Frontiers Science Center for Nonlinear Expectations, Ministry of Education, Qingdao 266237, China (Email: theoreming@163.com)}, Yulin Yang\thanks{Y. Yang is with Research Center for Mathematics and Interdisciplinary Sciences, Shandong University, Qingdao 266237, China (Email: forestyoung@mail.sdu.edu.cn)}
}

\date{}

\begin{document}

\maketitle

\begin{abstract}
LDPC codes play a vital role in coding theory and practical error correction. 
A central problem in this direction is to understand their rate--distance tradeoff. 
In this paper, we develop a local-growth argument for estimating ball sizes in the coset graphs of LDPC codes. 
Rather than estimating coset balls directly, we use a local-growth analysis to bound the coset-weight generating function of linear spaces spanned by low-weight vectors.
This approach sharpens the previous ball-size estimates of Iceland and Samorodnitsky.
Combined with a general method of Friedman and Tillich that relates balls in coset graphs to sizes of error-correcting codes, 
it further improves the upper bounds on the rate of LDPC codes for a significant range of relative distances.
\end{abstract}

\noindent\textbf{Keywords:} linear programming bound; rate--distance tradeoff; coset graphs; LDPC codes

\section{Introduction}

\noindent A binary code $C$ of block length $n$ and minimum distance $d$ is a subset of $\{0,1\}^n$ in which any two distinct codewords differ in at least $d$ coordinates. 
Let $A(n,d)$ denote the maximum cardinality of such a code. 
The rate of $C$ is defined as $R(C)=\frac{1}{n}\log_2 |C|$, and the asymptotic rate--distance tradeoff is defined by $R(\delta)=\limsup_{n\to\infty}\frac{1}{n}\log_2 A(n,\lfloor \delta n\rfloor)$, where $\delta$ is called the relative distance. 
A central problem in coding theory is to estimate $R(\delta)$. 
Since the Plotkin bound \cite{plotkin1960binary} implies that $R(\delta)=0$ for every $\delta\ge 1/2$, throughout the paper we assume that $\delta\in(0,1/2)$ is fixed.

The best known lower bound on $R(\delta)$ is the Gilbert--Varshamov bound \cite{gilbert1952comparison,varshamov1957estimate}, namely $R(\delta)\ge 1-H(\delta)$, where $H(x)=-x\log_2x-(1-x)\log_2(1-x)$ is the binary entropy function.

The best known upper bound on $R(\delta)$ is the MRRW bound, also known as the
linear programming bound \cite{mceliece1977new}. 
The first linear programming bound is
\begin{equation*}
    R(\delta)\le R_{\mathrm{LP}}^{(1)}(\delta)=
    \textstyle H\left(\frac12-\sqrt{\delta(1-\delta)}\right),
\end{equation*}
and the second linear programming bound is
\begin{equation*}
    R(\delta)\le R_{\mathrm{LP}}^{(2)}(\delta)
    =\min_{0\le u\le 1-2\delta}
    \left(
    1+R_{\mathrm{LP}}^{(1)}(\tfrac{1-u}{2})
    -R_{\mathrm{LP}}^{(1)}(\tfrac{1-\sqrt{u^2+2\delta u+2\delta}}{2})
    \right).
\end{equation*}
Taking $u=1-2\delta$ in the above minimization gives $R_{\mathrm{LP}}^{(1)}(\delta)$, so
$R_{\mathrm{LP}}^{(2)}(\delta)\le R_{\mathrm{LP}}^{(1)}(\delta)$ for all $0<\delta<1/2$. 
However, for $0.273\le\delta<1/2$, the two bounds coincide, so the minimization over $u$ improves on $R_{\mathrm{LP}}^{(1)}(\delta)$ only for relatively small values of $\delta$.

A binary code $C\subseteq\mathbb{F}_2^n$ is linear if it is a linear subspace of $\mathbb{F}_2^n$, denoted by $C\leq\mathbb{F}_2^n$. In this paper, we focus on Low-Density Parity-Check (LDPC) codes, an important class of linear codes introduced by Gallager \cite{gallager1962low}. 
These are linear codes with sparse parity checks. 
We say that a linear code $C\le \mathbb{F}_2^n$ has parity-check density at most $w$ if it admits a parity-check matrix whose rows all have Hamming weight at most $w$; equivalently, its dual code $C^\perp$ is spanned by vectors of Hamming weight at most $w$.
LDPC codes play a central role both in coding theory and in practical error correction \cite{sipser1996expander,richardson2008modern,mackay1996near,richardson2001design,mosheiff2021low,roth2006improved,skachek2003generalized}.

A fundamental question is how the LDPC constraint affects the rate--distance tradeoff. 
For a positive integer $w$, let $R_w(\delta)$ denote the largest asymptotically achievable rate of binary linear codes with parity-check density at most $w$ and relative minimum distance at least $\delta$. 
Since $R_2(\delta)=0$ for every $\delta>0$, we restrict attention to the nontrivial regime $w\ge 3$.

Gallager \cite{gallager1962low} showed that LDPC codes asymptotically attain the Gilbert--Varshamov bound as the parity-check density grows, i.e., $\limsup_{w\to\infty} R_w(\delta)\ge 1-H(\delta)$. For more recent results on the lower bounds of $R_w(\delta)$, see \cite{litsyn2002ensembles,litsyn2003distance,chen2023improved}. 

For upper bounds, we clearly have $R_w(\delta)\leq R(\delta)\leq R_{\mathrm{LP}}^{(2)}(\delta)$. 
Burshtein, Krivelevich, Litsyn, and Miller~\cite{burshtein2002upper} showed that
\begin{equation*}
    R_w(\delta)\le 
    1-\frac{H(\delta/2)}
    {H((1-(1-\delta)^w)/2)}.
\end{equation*}
This improves the estimate $R_w(\delta)\le R_{\mathrm{LP}}^{(2)}(\delta)$ for small relative distances.

Ben-Haim and Litsyn~\cite{ben2006upper} later obtained several improved upper bounds on $R_w(\delta)$;
these bounds yield improvements for $0<\delta<1/2$ in different subranges.
One of the main tools in their work is a shortening argument, a classical technique for passing to shorter codes with controlled parameters; see, for example,~\cite[Chapter~18, Section~9]{macwilliams1977theory}. 
In the LDPC setting, their argument yields the following recursive inequality:
\begin{equation}\label{eq: recursive-shortening}
    R_w(\delta)\leq 
    \min_{0\leq t\leq 1-2\delta}
    \left(
        (1-t)R_w\left(\frac{\delta}{1-t}\right)
        +t-\frac{t}{w}
    \right).
\end{equation}
Consequently, any explicit upper bound $U_w$ on $R_w$ can be substituted into the right-hand side of \eqref{eq: recursive-shortening} in place of $R_w$.\footnote{There is no gain in iterating this step, since two successive shortenings with parameters $t$ and $s$ are equivalent to one shortening with parameter $t+(1-t)s$.} 
Taking $U_w=R_{\mathrm{LP}}^{(2)}$ gives the shortening bound~\cite[Theorem~4]{ben2006upper}
\begin{equation}\label{eq: ben2006upper}
    R_w(\delta)\leq 
    \min_{0\leq t\leq 1-2\delta}
    \left(
        (1-t)R_{\mathrm{LP}}^{(2)}\left(\frac{\delta}{1-t}\right)
        +t-\frac{t}{w}
    \right).
\end{equation}
Among the bounds obtained in~\cite{ben2006upper}, this shortening bound is the strongest one for $\delta>0.287$ when $w\ge4$ and for $\delta>0.156$ when $w=3$.

More recently, Iceland and Samorodnitsky~\cite{iceland2015coset} developed a new approach that improves $R_w(\delta)\leq R_{\mathrm{LP}}^{(1)}(\delta)$ for all $0<\delta<1/2$. 
Since $R_{\mathrm{LP}}^{(1)}(\delta)=R_{\mathrm{LP}}^{(2)}(\delta)$ for $\delta>0.273$, this also improves the bound $R_w(\delta)\leq R_{\mathrm{LP}}^{(2)}(\delta)$ for $\delta>0.273$. 
Combined with \eqref{eq: recursive-shortening}, their bounds improve upon \eqref{eq: ben2006upper} for $\delta>0.287$ when $w\geq4$, and for $\delta>0.156$ when $w=3$. 
For the detailed comparison, see~\cite[Section~5]{iceland2015coset}.
Their approach builds on Friedman and Tillich's elegant proof of the first linear programming bound for linear codes~\cite{friedman2005generalized}. 
To state their result, we first recall the definition of a coset graph.

Let $C \leq \mathbb{F}_2^n$ be a linear code. 
For a coset $A\in\mathbb F_2^n/C^\perp$, define its weight, denoted by $w_{\mathrm H}(A)$, to be the minimum Hamming weight of a vector in $A$.
The coset graph of $C$, denoted by $\mathbb T$, is the graph with vertex set $\mathbb F_2^n/C^\perp$, where two vertices $A$ and $B$ are adjacent if and only if $w_{\mathrm H}(A-B)=1$.
For $r\ge 0$, let $B_{\mathbb T}(r)=\{A\in\mathbb F_2^n/C^\perp:w_{\mathrm H}(A)\le r\}$ denote the ball of radius $r$ centered at the zero coset $\boldsymbol{0}+C^\perp$ in $\mathbb T$.

Friedman and Tillich \cite{friedman2005generalized} proved the following result.
\begin{theorem}[\cite{friedman2005generalized}]\label{thm: friedman2005generalized}
Let $C\le \mathbb{F}_2^n$ be a linear code with relative distance $\delta$, and let $\mathbb{T}$ be its coset graph. Then
    \begin{equation*}
        |C|\leq2^{o(n)}|B_{\mathbb{T}}(\rho n)|, 
    \end{equation*}
    where $\rho=\tfrac{1}{2}-\sqrt{\delta(1-\delta)}$.
\end{theorem}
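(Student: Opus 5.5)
We follow the spectral approach of Friedman and Tillich. The coset graph $\mathbb T$ is a Cayley graph on the abelian group $G=\mathbb F_2^n/C^\perp$ with generators $e_1+C^\perp,\dots,e_n+C^\perp$. Its characters are indexed by $c\in C$ via $\chi_c(x+C^\perp)=(-1)^{\langle c,x\rangle}$. This is well defined because $c\perp C^\perp$. The eigenvalue of the adjacency operator on $\chi_c$ is $\lambda_c=\sum_i(-1)^{c_i}=n-2\wt(c)$, so $|G|=|C|$. Since $C$ has minimum distance $d=\delta n$, every nonzero $c$ has $\lambda_c\le n-2d$.

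\textbf{Test function.} The plan is to build a function $f$ supported on the ball $B=B_{\mathbb T}(r)$, $r=\rho n$, whose Rayleigh quotient is large: $\langle Af,f\rangle\ge \mu\|f\|^2$ with $\mu>n-2d$. We take $f$ to be the pushforward of a radial function on the Hamming cube. Let $\Lambda$ be the adjacency operator of $\{0,1\}^n$. Choose $g$ on $\{0,1\}^n$, radial, supported on the Hamming ball of radius $r$, with $\langle\Lambda g,g\rangle\ge \mu\|g\|^2$ and $g\ge 0$. The standard choice is the top eigenfunction of $\Lambda$ restricted to the ball, whose eigenvalue is $\mu=2\sqrt{r(n-r)}-o(n)$. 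Set $f(A)=\sum_{x\in A}g(x)$. Because the quotient map $x\mapsto x+C^\perp$ is a covering-type map that commutes with adjacency ($A_{\mathbb T}f$ is the pushforward of $\Lambda g$), we get $\langle Af,f\rangle_G\ge\langle \Lambda g,g\rangle\ge\mu\|g\|^2$. This uses nonnegativity: expanding $\|f\|^2$ and $\langle Af,f\rangle$ gives diagonal terms plus nonnegative cross terms. To compare norms we need $\|f\|^2\le |C^\perp\cap \text{ball}|\cdot(\ldots)$. The cleaner route is to bound the cross terms in $\langle Af,f\rangle$ from below by $0$ and those in $\|f\|^2$ by Cauchy--Schwarz, or simply to choose $r$ slightly so that $\mu$ has slack.

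\textbf{Spectral conclusion.} Write $f=\sum_c\hat f(c)\chi_c$. Then
\[\langle Af,f\rangle\le n|\hat f(0)|^2|G|+(n-2d)\|f\|^2.\]
Hence $(\mu-n+2d)\|f\|^2\le n\,|G|\,|\hat f(0)|^2$. By Cauchy--Schwarz on the support, $|G|\,|\hat f(0)|^2=\frac{1}{|G|}\bigl(\sum_A f(A)\bigr)^2\le \frac{|B|}{|G|}\|f\|^2$. Therefore
\[|C|=|G|\le \frac{n}{\mu-n+2d}\,|B|,\]
which is $2^{o(n)}|B|$ once $\mu-(n-2d)\ge 1/\mathrm{poly}(n)$.

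\textbf{Parameters and the main obstacle.} We need $2\sqrt{r(n-r)}>n-2d$ up to lower order. Solving $2\sqrt{\rho(1-\rho)}=1-2\delta$ gives $\rho=\tfrac12-\sqrt{\delta(1-\delta)}$, so taking $r=(\rho+\epsilon)n$ and letting $\epsilon\to0$ finishes the argument, because $|B_{\mathbb T}((\rho+\epsilon)n)|\le 2^{O(\epsilon n)+o(n)}|B_{\mathbb T}(\rho n)|$. That last inequality holds since every coset of weight at most $(\rho+\epsilon)n$ is reached from one of weight at most $\rho n$ by flipping at most $\epsilon n$ coordinates. The step requiring the most care is the transfer $\langle Af,f\rangle\ge\mu\|f\|^2$ from the cube to the quotient. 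Cross terms appear in both the numerator and $\|f\|^2$, so nonnegativity alone does not suffice.

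I would handle this as Friedman--Tillich do. Since $\Lambda g\ge \mu g$ pointwise on the ball (true for the Perron eigenfunction with Dirichlet condition), pushing forward gives $Af\ge\mu f$ pointwise on $B$. Then $\langle Af,f\rangle\ge\mu\|f\|^2$ follows directly, which avoids comparing norms altogether.
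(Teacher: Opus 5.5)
Your argument is correct and is essentially the Friedman--Tillich spectral proof; the paper does not reprove this theorem and only quotes it. Your final transfer step is valid: the linear pushforward $f=\pi_*g$, the identity $A_{\mathbb T}\pi_*g=\pi_*\Lambda g$ (with $A_{\mathbb T}$ the multigraph adjacency operator of $\mathbb T$), and a pointwise inequality together avoid comparing $\|f\|^2$ with $\|g\|^2$.

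One point needs stating precisely. The argument uses $\Lambda g\ge\mu g$ on the whole cube, not just on $B(r)$, because a coset of weight at most $r$ also contains vectors outside $B(r)$. Off the ball this holds trivially, since $g\ge 0$ vanishes there.

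Two quantitative slips are harmless but should be fixed.
\begin{itemize}
\item Enlarging the radius costs a factor $|B(\epsilon n+1)|\le 2^{H(\epsilon)n+O(\log n)}$, not $2^{O(\epsilon n)}$.
\item To obtain a single $2^{o(n)}$ factor, rather than one for each fixed $\epsilon$, take $\epsilon=\epsilon_n\to 0$. It must go to zero slowly enough that the $o(n)$ error in $\mu$ stays below the gain $2\sqrt{r(n-r)}-(n-2d)$, which is at least of order $\epsilon_n n$ because $\rho<1/2$.
\end{itemize}
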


Combining this with the trivial upper bound $|B_{\mathbb{T}}(r)|\leq |B(r)|$ yields the first linear programming bound, where $B(r)$ denotes the Hamming ball of radius $r$ centered at $0$.
Therefore, any improvement over the trivial upper bound on $|B_{\mathbb{T}}(r)|$ would lead to an improvement of the first linear programming bound.

Iceland and Samorodnitsky~\cite{iceland2015coset} obtained the first nontrivial estimate on $|B_{\mathbb T}(r)|$ for LDPC codes. 
For $\boldsymbol{c}\in\mathbb{F}_2^n$, let $\operatorname{supp}(\boldsymbol{c}):=\{1\le i\le n:\boldsymbol{c}_i=1\}$ denote the set of nonzero coordinates of $\boldsymbol{c}$ and let $[n]=\{1,\ldots,n\}$.
They showed that if $C\leq\mathbb{F}_2^n$ is an LDPC code of parity-check density at most $w$ and satisfies
$\bigcup_{\boldsymbol{c}\in C^\perp}\operatorname{supp}(\boldsymbol{c})=[n]$,\footnote{Otherwise, $C$ would contain a unit vector and hence have minimum distance $1$.} 
then, for every $0\leq \rho\leq 1/2$,
\begin{equation}\label{eq: iceland-ball}
    |B_{\mathbb T}(\rho n)|\leq 2^{-cn}|B(\rho n)|,
\end{equation}
where
$c=c(w,\rho)\geq \frac{\log_2 e}{8w^2}\left(\frac{\rho^w}{2}\right)^{w+1}$.
Combining this estimate with Theorem~\ref{thm: friedman2005generalized}, they obtained the following explicit upper bound:
\begin{equation}\label{eq: iceland-general}
    R_w(\delta)\leq H(\rho)-\frac{\log_2e}{8w^2}\left(\frac{\rho^{w}}{2}\right)^{w+1}, 
\end{equation}
where $\rho=\tfrac{1}{2}-\sqrt{\delta(1-\delta)}$.

For the special cases $w=3$ and $w=4$, they obtained sharper estimates for 
$|B_{\mathbb T}(\rho n)|$, which yield the following bounds on 
$R_3(\delta)$ and $R_4(\delta)$:
\begin{equation}\label{eq: iceland-w=3}
R_3(\delta)\leq 
\begin{cases}
\dfrac{2}{3},
& \delta\leq\dfrac12-\dfrac{\sqrt3}{4},\\[8pt]
\min\left\{\dfrac{2}{3},\rho+\dfrac{1}{2}H(2\rho)\right\},
& \delta> \dfrac12-\dfrac{\sqrt3}{4},
\end{cases}
\end{equation}
and
\begin{equation}\label{eq: iceland-w=4}
R_4(\delta)\leq H(\rho)+\frac{\rho}{2}\log_2 F(\rho),
\end{equation}
where $\rho=\tfrac{1}{2}-\sqrt{\delta(1-\delta)}$ and
$F(\rho)=(1-\rho)^4+4\rho(1-\rho)^3+6\rho^2(1-\rho)^2$.

Substituting \eqref{eq: iceland-w=3} into the recursive inequality \eqref{eq: recursive-shortening} improves the shortening bound \eqref{eq: ben2006upper} on $R_3(\delta)$ for $0.156<\delta<1/2$. 
Similarly, substituting \eqref{eq: iceland-w=4} for $w=4$ and \eqref{eq: iceland-general} for $w\geq5$ into the recursive inequality \eqref{eq: recursive-shortening} improves the shortening bound \eqref{eq: ben2006upper} on $R_w(\delta)$ for $0.287<\delta<1/2$.

\subsection{Our contribution}

\noindent 
In this work, we develop a local-growth argument for estimating ball sizes in coset graphs via the coset-weight generating function. 
Instead of bounding a coset ball directly, we encode the minimum Hamming weights of all cosets into a generating function and prove upper bounds for this function when the underlying linear space is generated by low-weight vectors. 
This approach allows us to control the growth of the generating function as low-weight generators are added one by one.
In this way, the problem of estimating coset balls is reduced to an extremal problem for linear spaces generated by low-weight vectors, leading to sharper LDPC rate–distance upper bounds and suggesting possible routes to further improvements (see Conjecture \ref{conjecture}).

Let $C \leq \mathbb{F}_2^n$ be a linear code. 
For $0<\lambda\leq 1$, define its coset-weight generating function by
\begin{equation*}
Q_C(\lambda)
=\sum_{A\in\mathbb{F}_2^n/C}\lambda^{w_{\mathrm H}(A)},
\end{equation*}
and set $Q_C(0)=1$ by continuity. 
Concrete examples are given in Lemmas~\ref{lem: base case} and~\ref{lem: optimal case}.

Our approach is based on the following simple observation.

\begin{observation}\label{obs: reduce}
Let $C\leq\mathbb{F}_2^n$ be a linear code, and let $\mathbb{T}$ be its coset graph. 
Then, for every $0<\lambda\leq 1$, 
\begin{equation*}
|B_{\mathbb{T}}(r)|
=\sum_{A\in\mathbb{F}_2^n/C^{\perp}:~ w_{\mathrm H}(A)\leq r}1 
\leq \sum_{A\in\mathbb{F}_2^n/C^{\perp}:~ w_{\mathrm H}(A)\leq r}\frac{\lambda^{w_{\mathrm H}(A)}}{\lambda^r} 
\leq \sum_{A\in\mathbb{F}_2^n/C^{\perp}}\frac{\lambda^{w_{\mathrm H}(A)}}{\lambda^r}
= \lambda^{-r}Q_{C^\perp}(\lambda).
\end{equation*}
\end{observation}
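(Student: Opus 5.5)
The observation is a Chernoff/Markov-type bound, and its proof is already encoded in the displayed chain. My plan is to justify each of the three relations in that chain separately, using only the definitions of $B_{\mathbb T}(r)$, of $w_{\mathrm H}(A)$ for a coset $A\in\mathbb F_2^n/C^\perp$, and of $Q_{C^\perp}$. Nothing from the earlier results of the paper is needed.

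The first equality is simply the definition of the ball. $B_{\mathbb T}(r)$ is the set of cosets of $C^\perp$ whose minimum Hamming weight is at most $r$, so its cardinality is the number of such cosets.

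For the first inequality, I fix $0<\lambda\le 1$ and a coset $A$ with $w_{\mathrm H}(A)\le r$. Since $0<\lambda\le1$, the map $x\mapsto\lambda^x$ is nonincreasing, so $\lambda^{w_{\mathrm H}(A)}\ge\lambda^r>0$. Dividing by $\lambda^r$ gives $1\le\lambda^{w_{\mathrm H}(A)}/\lambda^r$. Summing this termwise over all cosets in the ball gives the inequality. This is the only place where the restriction $\lambda\le 1$ is used. The case $\lambda=1$ is trivial, since the bound then reads $|B_{\mathbb T}(r)|\le|\mathbb F_2^n/C^\perp|$. The case $\lambda=0$ is excluded because $\lambda^{-r}$ must be finite.

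For the second inequality, I drop the constraint $w_{\mathrm H}(A)\le r$. Every summand $\lambda^{w_{\mathrm H}(A)}/\lambda^r$ is positive, so adding the cosets of weight larger than $r$ can only increase the sum. For the last equality, I pull out the constant factor $\lambda^{-r}$ and recognize $\sum_{A\in\mathbb F_2^n/C^\perp}\lambda^{w_{\mathrm H}(A)}$ as the coset-weight generating function $Q_{C^\perp}(\lambda)$, applying its definition to the linear code $C^\perp$ in place of $C$. There is no real obstacle here. The only points needing care are the direction of monotonicity of $\lambda^x$ for $\lambda\in(0,1]$, which holds for real $r\ge 0$ as well as for integer $r$, and the fact that the generating function is taken with respect to $C^\perp$, not $C$.
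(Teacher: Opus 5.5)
Your proposal is correct and follows the paper exactly. The paper's proof is the displayed chain itself, and you justify each link in the same way: the definition of the ball, $\lambda^{w_{\mathrm H}(A)}\ge\lambda^r$ when $w_{\mathrm H}(A)\le r$ and $0<\lambda\le 1$, positivity of the added terms, and the definition of $Q_{C^\perp}$.
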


Combining this observation with Theorem~\ref{thm: friedman2005generalized} gives the following proposition.

\begin{proposition}\label{prop: reduce}
Let $C\leq\mathbb{F}_2^n$ be a linear code with relative distance $\delta$. Then, for every $0<\lambda\leq 1$, 
\begin{equation*}
|C|\leq 2^{o(n)}\lambda^{-\rho n}Q_{C^\perp}(\lambda), 
\end{equation*}
where $\rho=\tfrac{1}{2}-\sqrt{\delta(1-\delta)}$.
\end{proposition}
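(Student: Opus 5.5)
The plan is to chain Theorem~\ref{thm: friedman2005generalized} with Observation~\ref{obs: reduce} at radius $r=\rho n$. Let $C\le\mathbb{F}_2^n$ be linear with relative distance $\delta$, and let $\mathbb{T}$ be its coset graph, whose vertex set is $\mathbb{F}_2^n/C^\perp$. Theorem~\ref{thm: friedman2005generalized} gives
\[
|C|\le 2^{o(n)}|B_{\mathbb{T}}(\rho n)|, \qquad \rho=\tfrac12-\sqrt{\delta(1-\delta)}.
\]

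Next I apply Observation~\ref{obs: reduce} with $r=\rho n$. For any fixed $0<\lambda\le 1$ it yields
\[
|B_{\mathbb{T}}(\rho n)|\le \lambda^{-\rho n}Q_{C^\perp}(\lambda).
\]
Substituting this into the first inequality gives $|C|\le 2^{o(n)}\lambda^{-\rho n}Q_{C^\perp}(\lambda)$, which is the claim.

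Two minor points need checking. First, $\rho n$ need not be an integer. Since Hamming weights are integers, $B_{\mathbb{T}}(\rho n)=B_{\mathbb{T}}(\lfloor\rho n\rfloor)$. The observation's derivation uses only $\lambda^{w_{\mathrm H}(A)}/\lambda^{r}\ge 1$ whenever $w_{\mathrm H}(A)\le r$ and $0<\lambda\le 1$, and this holds for real $r$. Second, the sum in the observation runs over $\mathbb{F}_2^n/C^\perp$, so the generating function that appears is $Q_{C^\perp}$ rather than $Q_C$, as in the statement. The $2^{o(n)}$ factor is exactly the one supplied by Theorem~\ref{thm: friedman2005generalized}, so its uniformity in $\lambda$ is irrelevant: $\lambda$ enters only through the second, exact inequality.

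No step is a real obstacle. The only care needed is the bookkeeping above: which quotient the generating function is taken over, and the non-integer radius.
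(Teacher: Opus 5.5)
Your proposal is correct and matches the paper's argument: the paper obtains Proposition~\ref{prop: reduce} by applying Observation~\ref{obs: reduce} with $r=\rho n$ and substituting the result into Theorem~\ref{thm: friedman2005generalized}. Your side remarks are also accurate: the observation holds for a non-integer radius, and the $2^{o(n)}$ factor comes only from Theorem~\ref{thm: friedman2005generalized}, so it does not depend on $\lambda$.
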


Our main result is the following theorem.

\begin{theorem}\label{thm: main-thm}
Let $D\le \mathbb{F}_2^n$ be spanned by vectors of weight at most $w$, and suppose that $\bigcup_{\boldsymbol{c}\in D}\supp(\boldsymbol{c})=[n]$. Then, for every $0\le \lambda\le 1$, 

\begin{equation*}
Q_D(\lambda)\leq\left(\frac{(1+\lambda)^w}{1+\lambda^w}\right)^{n/w}.
\end{equation*}

\noindent In particular, for $w=3$, we have

\begin{equation*}
Q_D(\lambda)\leq(1+3\lambda)^{n/3}, 
\end{equation*}

\noindent and for $w=4$, we have

\begin{equation*}
Q_D(\lambda)\leq(1+4\lambda+6\lambda^2)^{n/4}.
\end{equation*}
\end{theorem}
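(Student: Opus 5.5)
The plan is to prove Theorem~\ref{thm: main-thm} by a local-growth induction: build $D$ one low-weight generator at a time and track how much $Q$ can grow at each step. Fix generators $\boldsymbol{v}_1,\dots,\boldsymbol{v}_k$ of $D$, each of weight at most $w$. Discarding redundant ones and reordering, arrange that each $\boldsymbol{v}_j$ has $m_j\ge 1$ coordinates outside $S_{j-1}:=\bigcup_{i<j}\supp(\boldsymbol{v}_i)$. Put $D_j=\langle \boldsymbol{v}_1,\dots,\boldsymbol{v}_j\rangle$. Since the supports cover $[n]$, we have $\sum_j m_j=n$.

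A coordinate outside $S_j$ is unconstrained, so the quotient splits as a product:
\begin{equation*}
Q_{D_j}(\lambda)=Q_{D_j|_{S_j}}(\lambda)\,(1+\lambda)^{n-|S_j|}.
\end{equation*}
Writing $f(\lambda)=(1+\lambda)^w/(1+\lambda^w)$, the inductive claim is
\begin{equation*}
Q_{D_j}(\lambda)\le f(\lambda)^{|S_j|/w}(1+\lambda)^{n-|S_j|},
\end{equation*}
which at $j=k$ is the theorem. Equivalently, each step must satisfy
\begin{equation*}
\frac{Q_{D_j}(\lambda)}{Q_{D_{j-1}}(\lambda)}\le (1+\lambda^w)^{-m_j/w},
\end{equation*}
so each newly covered coordinate contributes a factor $(1+\lambda^w)^{-1/w}$. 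The base case is a single vector of weight $w$ on $w$ fresh coordinates. Its cosets are pairs $\{\boldsymbol{x},\boldsymbol{x}+\boldsymbol{v}\}$ in $\mathbb{F}_2^w$, so
\begin{equation*}
Q=\tfrac12\sum_{\boldsymbol{x}}\lambda^{\min(|\boldsymbol{x}|,w-|\boldsymbol{x}|)}.
\end{equation*}
I would compare this directly with $f$, which is presumably what Lemma~\ref{lem: base case} records.

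For the step, note that each coset of $D_j$ is the union of two cosets $A$ and $A+\boldsymbol{v}_j$ of $D_{j-1}$. Its contribution is $\max(\lambda^{\wt(A)},\lambda^{\wt(A+\boldsymbol{v}_j)})$, while together they contributed $\lambda^{\wt(A)}+\lambda^{\wt(A+\boldsymbol{v}_j)}$ to $Q_{D_{j-1}}$. So the step amounts to a lower bound on $\sum\min(\lambda^{\wt(A)},\lambda^{\wt(A+\boldsymbol{v}_j)})$ relative to $Q_{D_{j-1}}$.

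To obtain this bound, I would condition on the pattern $\boldsymbol{y}$ on the $m_j$ new coordinates $T_j$, since these are free in $D_{j-1}$. Write each coset of $D_{j-1}$ as $(A',\boldsymbol{y})$, with $A'$ a coset on the old coordinates. Then $\wt(A)=\wt(A')+|\boldsymbol{y}|$ and
\begin{equation*}
\wt(A+\boldsymbol{v}_j)=\wt(A'+\boldsymbol{v}_j')+|\boldsymbol{y}+\boldsymbol{1}_{T_j}|,
\end{equation*}
where $\boldsymbol{v}_j'$ is the old part of $\boldsymbol{v}_j$, of weight at most $w-m_j$. The pairing $\boldsymbol{y}\leftrightarrow \boldsymbol{y}+\boldsymbol{1}_{T_j}$ is then combined with the crude inequality $\wt(A'+\boldsymbol{v}_j')\le \wt(A')+w-m_j$. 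Averaging over $\boldsymbol{y}$ should produce a saving that depends only on $\lambda$, $w$ and $m_j$.

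The main obstacle is making this per-step saving at least $(1+\lambda^w)^{-m_j/w}$ uniformly in $m_j\in\{1,\dots,w\}$ and in the unknown structure of $D_{j-1}$. If the crude averaging is too weak for small $m_j$, I would try two remedies. The first is to choose the generator order greedily so that as many steps as possible have $m_j=w$. The second is a Hölder/convexity argument: treat $\lambda^{\wt(\cdot)}$ as a function on cosets and apply the inequality $\max(x,y)\le (x^w+y^w)^{1/w}$-type bounds to redistribute the saving over the $m_j$ new coordinates. Finally, the $w=3$ and $w=4$ statements would be obtained by specializing the step inequality, using the exact base-case values computed from $\min(|\boldsymbol{x}|,w-|\boldsymbol{x}|)$: namely $1+3\lambda$ on $\mathbb{F}_2^3$ and $1+4\lambda+3\lambda^2$ on $\mathbb{F}_2^4$. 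I would then check that these are the per-$w$-coordinate factors that the induction actually propagates in those cases.
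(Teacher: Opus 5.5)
You have the paper's architecture: add generators one at a time, view each new coset as a pair of old cosets, split off the $m_j$ fresh coordinates, and use $\wt(A'+\boldsymbol v_j')-\wt(A')\le w-m_j$. Your full-space normalization is equivalent to the paper's restricted one. Discarding generators passes to a subcode, so you also need $Q_D\le Q_{D_k}$ from Lemma~\ref{lem: monotonicity}, which you do not state.

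The real gap is the per-step inequality $Q_{D_j}/Q_{D_{j-1}}\le(1+\lambda^w)^{-m_j/w}$, where all the work lies. You only assert that it ``should'' follow from averaging, then flag it as the main obstacle. Neither remedy closes it: no ordering forces $m_j=w$ in general, and the Hölder idea is unspecified.

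The missing idea is to pair the \emph{old} cosets, not just the patterns $\boldsymbol y$. Split the cosets on the old coordinates into pairs $\{A',A'+\boldsymbol v_j'\}$, let $A'$ be the lighter one, and put $\Delta=\wt(A'+\boldsymbol v_j')-\wt(A')\in[0,w-m_j]$. The degenerate case $A'=A'+\boldsymbol v_j'$ gives the factor $\varPhi_{m_j,0}$ directly. The cosets of $D_{j-1}$ over such a pair contribute exactly $\lambda^{\wt(A')}(1+\lambda^{\Delta})(1+\lambda)^{m_j}$. The cosets of $D_j$ over it contribute exactly $\lambda^{\wt(A')}\sum_{t}\binom{m_j}{t}\lambda^{\min\{t,\Delta+m_j-t\}}$. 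So the step ratio is a weighted average of $\varPhi_{m_j,\Delta}(\lambda)/(1+\lambda)^{m_j}$ (Lemma~\ref{lem: extension}). The termwise bound $\lambda^{\max\{t,\Delta+m-t\}}\ge\lambda^{\Delta+m}\lambda^{\min\{t,\Delta+m-t\}}$ then gives $\varPhi_{m,\Delta}(\lambda)\le(1+\lambda)^m/(1+\lambda^{m+\Delta})$. Hence every step ratio is at most $1/(1+\lambda^w)\le(1+\lambda^w)^{-m_j/w}$, uniformly in $m_j$.

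This bookkeeping is not cosmetic. Suppose you average symmetrically over $\boldsymbol y$ without recording which of $A'$, $A'+\boldsymbol v_j'$ is lighter, bounding each saving by half the sum of the two crude one-sided bounds. For $w=4$ and $m_j=1$ this gives only the step ratio $1-\lambda^3/2$, whereas $(1+4\lambda+6\lambda^2)^{1/4}/(1+\lambda)=1-\lambda^3+O(\lambda^4)$ is required.

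Your $w=4$ target is also not the statement. The theorem claims $(1+4\lambda+6\lambda^2)^{n/4}$; the bound $(1+4\lambda+3\lambda^2)^{n/4}$ is the paper's open Conjecture~\ref{conjecture}. A one-step bound of the form $\max_\Delta\varPhi_{m_j,\Delta}$ cannot deliver it. Indeed, $\varPhi_{2,2}(\lambda)=(1+\lambda)^2/(1+\lambda^2)=1+2\lambda+O(\lambda^3)$ exceeds $(1+4\lambda+3\lambda^2)^{1/2}=1+2\lambda-\tfrac12\lambda^2+O(\lambda^3)$ for small $\lambda$. What the special cases actually need is a finite check of the local factors: against $1+4\lambda+6\lambda^2$ for $w=4$ (Lemma~\ref{lem: w=4 factor}), and against $1+3\lambda$ for $w=3$ (Lemma~\ref{lem: w=3 factor}).
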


The bound on $Q_D(\lambda)$ for $w=3$ is sharp: when $3\mid n$, equality is attained by the code generated by the indicator vectors of disjoint $3$-sets partitioning $[n]$ (see Lemma \ref{lem: optimal case}).

Theorem~\ref{thm: main-thm} and Proposition~\ref{prop: reduce} give an upper bound on the rate of LDPC codes in terms of $\lambda$. 
For ease of presentation, we use convenient choices of $\lambda$
to obtain the following explicit bounds. For $w\geq 4$, these
choices are not generally optimal; optimization over $\lambda$
is discussed in Section~\ref{sec: proof rate-estimate}.
\begin{theorem}\label{thm: rete-estimate}
Let $0<\delta<\frac12$, and set 
$\rho=\tfrac12-\sqrt{\delta(1-\delta)}$.  Then, for every $w\geq 3$,
\begin{equation}\label{eq: general}
R_w(\delta)\le H(\rho)-\frac{1}{w}\log_2\left(1+\left(\frac{\rho}{1-\rho}\right)^w\right).
\end{equation}
In particular,
\begin{equation}\label{eq: w=3}
R_3(\delta)\leq 
\begin{cases}
\dfrac{2}{3},
& \delta\leq\dfrac12-\dfrac{\sqrt3}{4},\\[8pt]
\dfrac{2}{3}H_4(3\rho),
& \delta> \dfrac12-\dfrac{\sqrt3}{4},
\end{cases}
\end{equation}
and
\begin{equation}\label{eq: w=4}
    R_4(\delta)\leq H(\rho)+\frac{1}{4}\log_2 F(\rho),
\end{equation}
where $H_4(x)=x\log_43-x\log_4x-(1-x)\log_4(1-x)$ and
$F(\rho)=(1-\rho)^4+4\rho(1-\rho)^3+6\rho^2(1-\rho)^2$.
\end{theorem}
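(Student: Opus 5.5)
The plan is to combine Proposition~\ref{prop: reduce} with Theorem~\ref{thm: main-thm}, applied to $D=C^\perp$, and then choose $\lambda$ well. Let $C$ have parity-check density at most $w$ and relative distance $\delta$. We may assume $\bigcup_{\boldsymbol{c}\in C^\perp}\supp(\boldsymbol{c})=[n]$, since otherwise $C$ contains a unit vector and has distance $1$. Then for every $0<\lambda\le 1$,
\begin{equation*}
\frac1n\log_2|C|\le o(1)-\rho\log_2\lambda+\frac1w\log_2\frac{(1+\lambda)^w}{1+\lambda^w}.
\end{equation*}
Taking $\limsup$ bounds $R_w(\delta)$ by the right-hand side without the $o(1)$ term, for any admissible $\lambda$.

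For the general bound \eqref{eq: general}, I take $\lambda=\rho/(1-\rho)$. This lies in $(0,1]$ because $\rho\le 1/2$, and it is the minimizer of $-\rho\log_2\lambda+\log_2(1+\lambda)$. A direct computation gives
\begin{equation*}
-\rho\log_2\frac{\rho}{1-\rho}+\log_2\frac{1}{1-\rho}=H(\rho).
\end{equation*}
The remaining term is $-\frac1w\log_2(1+\lambda^w)$, which is exactly \eqref{eq: general}.

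For $w=4$, I use the special form $(1+4\lambda+6\lambda^2)^{n/4}$ with the same $\lambda=\rho/(1-\rho)$. Write
\begin{equation*}
1+4\lambda+6\lambda^2=(1-\rho)^{-4}F(\rho).
\end{equation*}
Then the exponent is $-\rho\log_2\lambda-\log_2(1-\rho)+\frac14\log_2F(\rho)=H(\rho)+\frac14\log_2F(\rho)$, which is \eqref{eq: w=4}.

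For $w=3$, I minimize $f(\lambda)=-\rho\log_2\lambda+\frac13\log_2(1+3\lambda)$ over $(0,1]$. Setting the derivative to zero gives $\rho(1+3\lambda)=\lambda$, so $\lambda^*=\rho/(1-3\rho)$. This is admissible exactly when $\rho\le 1/4$. Now $\rho=1/4$ corresponds to $\sqrt{\delta(1-\delta)}=1/4$, i.e.\ $\delta=\frac12-\frac{\sqrt3}{4}$, and $\rho\le1/4$ holds exactly when $\delta\ge\frac12-\frac{\sqrt3}4$.
\begin{itemize}
\item When $\rho>1/4$, i.e.\ $\delta<\frac12-\frac{\sqrt3}{4}$, the function $f$ is decreasing on $(0,1]$, so I take $\lambda=1$ and get $\frac13\log_2 4=\frac23$. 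The boundary case $\delta=\frac12-\frac{\sqrt3}{4}$ gives $\frac23$ as well, since there $\lambda^*=1$ and $H_4(3/4)=1$.
\item When $\rho<1/4$, substituting $\lambda^*$ and writing $x=3\rho$ gives $1+3\lambda^*=1/(1-x)$. The exponent becomes $-\rho\log_2\rho+\rho\log_2(1-x)-\frac13\log_2(1-x)$.
\end{itemize}

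The only real work is checking that this last expression equals $\frac23H_4(x)$; everything else is immediate from earlier results. Since $\frac23\log_4=\frac13\log_2$, I need
\begin{equation*}
\frac23H_4(x)=\frac13\bigl[x\log_23-x\log_2x-(1-x)\log_2(1-x)\bigr].
\end{equation*}
Using $x\log_2 3-x\log_2 x=-3\rho\log_2\rho$, the bracket divided by $3$ is $-\rho\log_2\rho-\frac{1-x}{3}\log_2(1-x)$. Expanding $-\frac{1-x}{3}\log_2(1-x)=-\frac13\log_2(1-x)+\rho\log_2(1-x)$ shows this matches the exponent above.
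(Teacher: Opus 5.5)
Your proposal is correct and follows the paper's proof: you apply Theorem~\ref{thm: main-thm} to $C^\perp$, combine it with Proposition~\ref{prop: reduce}, and choose $\lambda=\rho/(1-\rho)$ for the general and $w=4$ bounds, and $\lambda=1$ (when $\rho>1/4$) or $\lambda=\rho/(1-3\rho)$ (when $\rho\le 1/4$) for $w=3$. You also supply two details the paper leaves implicit: the covering condition needed to apply Theorem~\ref{thm: main-thm} to $C^\perp$, and the boundary case $\rho=1/4$.
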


\paragraph{Comparison with previous results.} Lastly, we compare Theorem \ref{thm: rete-estimate} with the bounds in \cite{iceland2015coset}. 

\begin{itemize}
    \item For the general bound, using $\log_2(1+x)\geq\frac{\log_2 e}{2}x$ for $0\leq x\leq 1$ and $0<\rho<\frac{1}{2}$, we obtain
    \begin{equation*}
    \frac{\frac{1}{w}\log_2\left(1+\left(\frac{\rho}{1-\rho}\right)^w\right)}{\frac{\log_2e}{8w^2}\left(\frac{\rho^{w}}{2}\right)^{w+1}}\geq\frac{\frac{\log_2e}{2w}\left(\frac{\rho}{1-\rho}\right)^w}{\frac{\log_2e}{8w^2}\left(\frac{\rho^{w}}{2}\right)^{w+1}}=\frac{w2^{w+3}}{(1-\rho)^w\rho^{w^2}}> w2^{w^2+w+3}.
    \end{equation*}
    Hence \eqref{eq: general} strictly improves \eqref{eq: iceland-general}.

    \item For $w=3$, \eqref{eq: w=3} agrees with \eqref{eq: iceland-w=3} when $\delta\le \frac12-\frac{\sqrt3}{4}$.
    For $\frac12-\frac{\sqrt3}{4}<\delta<\frac12$, or equivalently $0<\rho<1/4$, it is easy to check that $\frac{2}{3}H_4(3\rho)<\rho+\frac{1}{2}H(2\rho)$. 
    Figure \ref{fig:w3-explicit} presents a comparison with \cite{iceland2015coset}.
    
    \item For $w=4$, since $\rho<1/2$ and $0<F(\rho)<1$, we have
    \begin{equation*}
        H(\rho)+\frac{1}{4}\log_2(F(\rho))< H(\rho)+\frac{\rho}{2}\log_2(F(\rho)).
    \end{equation*}
    Therefore, \eqref{eq: w=4} strictly improves the bound in \eqref{eq: iceland-w=4}.

    \item Moreover, substituting our bounds into the recursive inequality \eqref{eq: recursive-shortening} yields further improvements on the upper bounds for $R_w(\delta)$. 
    To the best of our knowledge, the resulting bounds are currently the best known for $\delta>0.287$ when $w\ge4$, and for $\delta>0.156$ when $w=3$. 
    Table~\ref{tab:shortening-w3} presents a numerical comparison with~\cite{iceland2015coset} after shortening for $w=3$, as the corresponding curves would be nearly indistinguishable in a plot. 
    We compute the two bounds by substituting \eqref{eq: iceland-w=3} and~\eqref{eq: w=3}, respectively, into~\eqref{eq: recursive-shortening} and minimizing over $t$ separately for each bound.
\end{itemize}

\begin{figure}[H]
\centering
\begin{tikzpicture}
\begin{axis}[
    width=0.6\textwidth,
    height=6.5cm,
    xmin=0,
    xmax=0.5,
    ymin=0,
    ymax=0.80,
    xtick={0,0.1,...,0.5},
    scaled x ticks=false,
    xticklabel style={
        /pgf/number format/fixed,
        /pgf/number format/precision=2
    },
    grid=major,
    xlabel={$\delta$},
    ylabel={Upper bound on $R_3(\delta)$},
    tick label style={font=\small},
    label style={font=\small},
    legend style={
        font=\small,
        draw=none,
        fill=none
    },
    legend pos=north east,
    legend cell align={left},
    every axis plot/.append style={line width=0.5pt},
]

\addplot+[mark=none, blue]
    table[col sep=comma, x=delta, y=rate_our_w3] {data.csv};
\addlegendentry{Our bound}

\addplot+[mark=none, red, dashed]
    table[col sep=comma, x=delta, y=rate_iceland_w3] {data.csv};
\addlegendentry{Iceland--Samorodnitsky}

\end{axis}
\end{tikzpicture}

\caption{Comparison with~\cite{iceland2015coset}
for $w=3$, before shortening.}
\label{fig:w3-explicit}
\end{figure}

\begin{table}[H]
\centering
\small
\caption{Comparison with~\cite{iceland2015coset} for $w=3$, after shortening.}
\label{tab:shortening-w3}
\begin{tabular}{cccc}
\hline
$\delta$ & Iceland--Samorodnitsky & Our bound & Improvement \\
\hline
0.16 & 0.432195 & 0.432137 & $5.72\times10^{-5}$ \\
0.20 & 0.373576 & 0.373505 & $7.15\times10^{-5}$ \\
0.30 & 0.227031 & 0.226924 & $1.07\times10^{-4}$ \\
0.40 & 0.081395 & 0.081320 & $7.49\times10^{-5}$ \\
\hline
\end{tabular}
\end{table}

\paragraph{Organization.}
The remainder of this paper is organized as follows. 
In Section~\ref{sec: preliminaries}, we derive some basic properties of the coset-weight generating function. 
In Section~\ref{sec: proof main-thm}, we prove Theorem~\ref{thm: main-thm}. 
In Section~\ref{sec: proof rate-estimate}, we prove Theorem~\ref{thm: rete-estimate}. Finally, we conclude the paper in Section~\ref{sec: rmk}.

\paragraph{Notation.} 
Throughout the paper, we use bold lowercase letters to denote vectors in $\mathbb{F}_2^n$; in particular, $\boldsymbol{0}$ and $\boldsymbol{1}$ denote the zero vector and the all-one vector in $\mathbb F_2^n$, respectively. 
For a subset $U \subseteq [n]$, let $\mathbb F_2^U$ denote the set of binary vectors indexed by $U$. 
For a vector $\boldsymbol{x} \in \mathbb F_2^n$ and a set $A \subseteq \mathbb F_2^n$, we write $\boldsymbol{x}_U = (x_i : i \in U) \in \mathbb F_2^U$ and 
$A|_U = \{\boldsymbol{a}_U : \boldsymbol{a} \in A\} \subseteq \mathbb F_2^U$.
We use $\le$ to denote inclusion of linear subspaces, and $\langle A \rangle$ to denote the linear span of a set of vectors $A$ over $\mathbb F_2$.

For disjoint subsets $U, V \subseteq [n]$ and subspaces $D \le \mathbb F_2^U$ and $E \le \mathbb F_2^V$, we identify $D$ with $D\times \{\boldsymbol{0}_V\}$ and $E$ with $\{\boldsymbol{0}_U\}\times E$ inside $\mathbb F_2^{U}\times\mathbb{F}_2^V$. 
Under this identification, we write $D\oplus E=\{(\boldsymbol{x},\boldsymbol{y}):\boldsymbol{x}\in D,\ \boldsymbol{y}\in E\}$.

\section{Properties of the coset-weight generating function}\label{sec: preliminaries}

\noindent In this section, we prove some basic properties of $Q_D(\lambda)$. 
First, we prove a monotonicity property of the coset-weight generating function.

\begin{lemma}\label{lem: monotonicity}
If $D\le E\le \mathbb{F}_2^n$, then $Q_E(\lambda)\le Q_D(\lambda)$ for every $0\le \lambda\le 1$.
\end{lemma}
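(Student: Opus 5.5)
The plan is to compare the two sums through the natural surjection from cosets of $D$ onto cosets of $E$. Since $D\le E$, every coset $B\in\mathbb F_2^n/E$ is a disjoint union of cosets of $D$. Concretely, if $B=\boldsymbol{x}+E$, then $B=\bigcup_{\boldsymbol{e}\in E/D}(\boldsymbol{x}+\boldsymbol{e}+D)$, a union of $[E:D]\ge 1$ cosets of $D$. So the map $\pi:\mathbb F_2^n/D\to\mathbb F_2^n/E$, $A\mapsto A+E$, is surjective, and its fibres partition $\mathbb F_2^n/D$.

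The key step is a pointwise comparison of weights. For each $B\in\mathbb F_2^n/E$, choose a vector $\boldsymbol{v}\in B$ of minimum Hamming weight, so $\wt(\boldsymbol{v})=\wt(B)$, and let $A_B=\boldsymbol{v}+D\in\pi^{-1}(B)$. Since $\boldsymbol{v}\in A_B\subseteq B$, we get $\wt(B)\le \wt(A_B)\le \wt(\boldsymbol{v})=\wt(B)$, so $\wt(A_B)=\wt(B)$. More generally, every $A\in\pi^{-1}(B)$ satisfies $\wt(A)\ge\wt(B)$, although we only need the one coset $A_B$ per fibre.

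Now sum. For $0<\lambda\le1$, every term $\lambda^{\wt(A)}$ is nonnegative, so
\begin{equation*}
Q_D(\lambda)=\sum_{B\in\mathbb F_2^n/E}\ \sum_{A\in\pi^{-1}(B)}\lambda^{\wt(A)}\ \ge\ \sum_{B\in\mathbb F_2^n/E}\lambda^{\wt(A_B)}=\sum_{B}\lambda^{\wt(B)}=Q_E(\lambda).
\end{equation*}

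The case $\lambda=0$ follows from the convention $Q_\cdot(0)=1$, since both sides equal $1$. Alternatively it follows by letting $\lambda\to0^+$; in that limit only the zero coset contributes $1$ and every other coset contributes $0$.

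There is no real obstacle here. The only point that needs care is identifying a coset of $D$ inside each fibre whose weight matches that of $B$. Naively bounding each fibre by the inequality $\wt(A)\ge\wt(B)$ alone would give the wrong direction, so the argument must use the single term $A_B$ per fibre and discard the other, nonnegative, terms.
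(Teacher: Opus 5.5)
Your proof is correct and is essentially the paper's argument: your assignment $B\mapsto A_B=\boldsymbol{v}+D$, with $\boldsymbol{v}$ a minimum-weight vector of $B$, is exactly the weight-preserving injection $\mathbb{F}_2^n/E\to\mathbb{F}_2^n/D$ that the paper constructs. The only difference is that injectivity comes for free from your fibre decomposition, where the paper checks it separately, and discarding the remaining nonnegative terms then gives $Q_E(\lambda)\le Q_D(\lambda)$ just as in the paper.
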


\begin{proof}
It suffices to construct an injection $\pi:\mathbb{F}_2^n/E \to \mathbb{F}_2^n/D$ that preserves coset weights. For each $E$-coset $B$, choose $\boldsymbol{x}\in B$ with $w_{\mathrm H}(\boldsymbol{x})=w_{\mathrm H}(B)$ and set $\pi(B) \coloneqq\boldsymbol{x}+D \subseteq \boldsymbol{x}+E = B$. Then $w_{\mathrm H}(\boldsymbol{x}) \ge w_{\mathrm H}(\pi(B)) \ge w_{\mathrm H}(B) = w_{\mathrm H}(\boldsymbol{x})$, so $w_{\mathrm H}(\pi(B)) = w_{\mathrm H}(B)$.

To see that $\pi$ is injective, suppose $\pi(B)=\pi(B')$. Let $\boldsymbol{x}\in B$ and $\boldsymbol{x}'\in B'$ be the chosen representatives. Then $\boldsymbol{x}+D = \boldsymbol{x}'+D$, so $\boldsymbol{x}-\boldsymbol{x}' \in D\subseteq E$, and hence $B=B'$.
\end{proof}

Second, we prove a formula for computing the coset-weight generating function of a direct sum of codes with disjoint supports.

\begin{lemma}\label{lem: decomposition}
For disjoint $U,V\subseteq[n]$ and $D\le \mathbb F_2^{U}, E\leq\mathbb{F}_2^{V}$, we have $Q_{D\oplus E}(\lambda)=Q_{D}(\lambda)Q_{E}(\lambda)$.
\end{lemma}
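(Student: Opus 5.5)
The plan is to exhibit an explicit weight-preserving bijection between the quotient $\mathbb F_2^{U\cup V}/(D\oplus E)$ and the product $(\mathbb F_2^U/D)\times(\mathbb F_2^V/E)$. Weights should add across this bijection; the product formula then follows by expanding $Q_D(\lambda)Q_E(\lambda)$ as a double sum. I take the ambient space to be $\mathbb F_2^{U\cup V}$, which is the natural reading since $D\oplus E\le \mathbb F_2^U\times\mathbb F_2^V$. If $U\cup V\neq[n]$, the remaining coordinates contribute a factor $Q_{\{\boldsymbol 0\}}(\lambda)=(1+\lambda)^{|[n]\setminus(U\cup V)|}$, by the same argument, and this case can be handled separately if needed.

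First, I define $\phi:\mathbb F_2^{U}\times\mathbb F_2^V\to (\mathbb F_2^U/D)\times(\mathbb F_2^V/E)$ by $(\boldsymbol x,\boldsymbol y)\mapsto(\boldsymbol x+D,\boldsymbol y+E)$. This map is linear and surjective, and its kernel is exactly $D\oplus E$. Hence it induces an isomorphism from the quotient by $D\oplus E$ onto the product of quotients. Concretely, every coset of $D\oplus E$ has the form $A\times B=\{(\boldsymbol a,\boldsymbol b):\boldsymbol a\in A,\boldsymbol b\in B\}$ with $A$ a $D$-coset and $B$ an $E$-coset, and distinct pairs $(A,B)$ give distinct cosets.

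Second, I show that $w_{\mathrm H}(A\times B)=w_{\mathrm H}(A)+w_{\mathrm H}(B)$. Because $U$ and $V$ are disjoint, $w_{\mathrm H}((\boldsymbol a,\boldsymbol b))=w_{\mathrm H}(\boldsymbol a)+w_{\mathrm H}(\boldsymbol b)$. The choices of $\boldsymbol a\in A$ and $\boldsymbol b\in B$ are independent, so minimizing the sum over the product set is the same as minimizing each term separately.

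Finally, I compute
\[
Q_{D\oplus E}(\lambda)=\sum_{A}\sum_{B}\lambda^{w_{\mathrm H}(A)+w_{\mathrm H}(B)}=Q_D(\lambda)Q_E(\lambda),
\]
where the double sum runs over all $D$-cosets $A$ and $E$-cosets $B$. The case $\lambda=0$ holds by continuity, or directly with the convention $0^0=1$. There is no real obstacle here. The only point needing care is the bookkeeping that identifies the cosets of $D\oplus E$ with pairs of cosets, which is the kernel computation in the first step.
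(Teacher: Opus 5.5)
Your proposal is correct and follows the same route as the paper's proof: identify the cosets of $D\oplus E$ bijectively with pairs $A\times B$, use disjointness of $U$ and $V$ to get $w_{\mathrm H}(A\times B)=w_{\mathrm H}(A)+w_{\mathrm H}(B)$, and factor the double sum. You add some detail the paper leaves implicit, namely the kernel computation for the bijection and the independence argument for the weight additivity.
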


\begin{proof}
Every coset of $D\oplus E$ has the form $A\times B$, where $A\in \mathbb F_2^U/D$ and $B\in \mathbb F_2^V/E$, and this correspondence is bijective. Since $U$ and $V$ are disjoint, $w_{\mathrm H}(A\times B)=w_{\mathrm H}(A)+w_{\mathrm H}(B)$. Therefore
\begin{equation*}
Q_{D\oplus E}(\lambda)
=\sum_{A,B}\lambda^{w_{\mathrm H}(A\times B)}
=\left(\sum_A\lambda^{w_{\mathrm H}(A)}\right)
\left(\sum_B\lambda^{w_{\mathrm H}(B)}\right)
=Q_D(\lambda)Q_E(\lambda).
\end{equation*}
\end{proof}

Lastly, we compute the coset-weight generating functions of some simple codes.

\begin{lemma}\label{lem: base case}
Let $D_1\leq\mathbb F_2^n$ be the linear code generated by the all-one vector.
Then 
\begin{equation*}
Q_{D_1}(\lambda)=\frac12\sum_{t=0}^n\binom nt\lambda^{\min\{t,~n-t\}}.
\end{equation*}
\end{lemma}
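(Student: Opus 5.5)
The plan is to compute the cosets of $D_1=\{\boldsymbol{0},\boldsymbol{1}\}$ explicitly and read off their weights. Since $D_1$ has exactly two elements, every coset of $D_1$ has the form $\{\boldsymbol{x},\boldsymbol{x}+\boldsymbol{1}\}$, and the two vectors $\boldsymbol{x}$ and $\boldsymbol{x}+\boldsymbol{1}$ are distinct (we assume $n\ge 1$). Hence $\mathbb F_2^n/D_1$ is a partition of $\mathbb F_2^n$ into $2^{n-1}$ unordered pairs $\{\boldsymbol{x},\boldsymbol{1}+\boldsymbol{x}\}$, where $\boldsymbol{1}+\boldsymbol{x}$ is the complement of $\boldsymbol{x}$.

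Next I will compute the weight of a coset. If $w_{\mathrm H}(\boldsymbol{x})=t$, then $w_{\mathrm H}(\boldsymbol{x}+\boldsymbol{1})=n-t$. Therefore the coset $A=\boldsymbol{x}+D_1$ has $w_{\mathrm H}(A)=\min\{t,n-t\}$. An important point is that this value is the same whichever of the two representatives is used to compute it.

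Now I will use a double-counting step to convert the sum over cosets into a sum over vectors. Each coset contains exactly two vectors, and both are assigned the common value $\lambda^{w_{\mathrm H}(A)}$. This gives
\begin{equation*}
Q_{D_1}(\lambda)=\sum_{A}\lambda^{w_{\mathrm H}(A)}=\frac12\sum_{\boldsymbol{x}\in\mathbb F_2^n}\lambda^{\min\{w_{\mathrm H}(\boldsymbol{x}),\,n-w_{\mathrm H}(\boldsymbol{x})\}}.
\end{equation*}
Grouping the vectors $\boldsymbol{x}$ by their weight $t$, of which there are $\binom nt$, yields the stated formula. When $\lambda=0$ the identity still holds under the convention $0^0=1$: only the two vectors $\boldsymbol{0}$ and $\boldsymbol{1}$ contribute, giving $\tfrac12\cdot 2=1=Q_{D_1}(0)$. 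This agrees with the paper's definition $Q_C(0)=1$.

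There is no real obstacle here. The only step needing care is the observation that the two elements of a coset contribute the same value $\lambda^{\min\{t,n-t\}}$, which is exactly what justifies the factor $\tfrac12$.
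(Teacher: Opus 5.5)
Your proposal is correct and follows essentially the same argument as the paper. Both pair each vector with its complement, note that each coset is counted twice when summing over $\mathbb F_2^n$ (hence the factor $\tfrac12$), read off the coset weight as $\min\{t,n-t\}$, and group by weight. Your extra check at $\lambda=0$ under the convention $0^0=1$ is a harmless addition consistent with $Q_C(0)=1$.
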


\begin{proof}
Observe that summing over $\boldsymbol{x}\in\mathbb{F}_2^n$ counts each $D_1$-coset $\boldsymbol{x}+\langle\boldsymbol{1}\rangle = \{\boldsymbol{x},\boldsymbol{x}+\boldsymbol{1}\}$ exactly twice, since both $\boldsymbol{x}$ and $\boldsymbol{x}+\boldsymbol{1}$ lie in the same coset. Hence,
\begin{equation*}
Q_{D_1}(\lambda)
=\frac12\sum_{\boldsymbol{x}\in\mathbb{F}_2^n}
\lambda^{\wt(\boldsymbol{x}+\langle\boldsymbol{1}\rangle)}=
\frac12\sum_{\boldsymbol{x}\in\mathbb{F}_2^n}
\lambda^{\min\{w_{\mathrm H}(\boldsymbol{x}),~w_{\mathrm H}(\boldsymbol{x}+\boldsymbol{1})\}}=\frac12\sum_{t=0}^n\binom nt\lambda^{\min\{t,~n-t\}}.
\end{equation*}
\end{proof}

\begin{lemma}\label{lem: optimal case}
    Suppose that $w\mid n$. Let $\boldsymbol{b}_1,\ldots,\boldsymbol{b}_{n/w}\in\mathbb{F}_2^n$ be vectors with pairwise disjoint supports, each of size $w$, such that $\bigcup_{1\le i\le n/w}\supp(\boldsymbol{b}_i)=[n]$. Let $D_{n/w}=\langle \boldsymbol{b}_1,\ldots,\boldsymbol{b}_{n/w} \rangle$. Then 
    \begin{equation*}
        Q_{D_{n/w}}(\lambda)=\left(\frac12\sum_{t=0}^w\binom wt\lambda^{\min\{t,~w-t\}}\right)^{n/w}.
    \end{equation*}
\end{lemma}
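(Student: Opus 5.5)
The plan is to reduce to the single-block case using Lemma~\ref{lem: decomposition} and then apply Lemma~\ref{lem: base case} to each block.

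First, write $U_i=\supp(\boldsymbol{b}_i)$ for $1\le i\le n/w$. These sets are pairwise disjoint and partition $[n]$. Let $D^{(i)}=\langle \boldsymbol{b}_{i,U_i}\rangle\le \mathbb{F}_2^{U_i}$. Under the identification fixed in the Notation paragraph, we have
\begin{equation*}
D_{n/w}=D^{(1)}\oplus D^{(2)}\oplus\cdots\oplus D^{(n/w)}.
\end{equation*}
The inclusion $\supseteq$ holds because every vector of the right-hand side is a sum of a subset of the $\boldsymbol{b}_i$. The inclusion $\subseteq$ holds because any $\mathbb F_2$-combination $\sum_{i\in S}\boldsymbol{b}_i$ restricts on each $U_i$ to either $\boldsymbol{b}_{i,U_i}$ or $\boldsymbol{0}_{U_i}$.

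Second, apply Lemma~\ref{lem: decomposition} repeatedly, by induction on the number of summands, grouping $U_1\cup\cdots\cup U_{k-1}$ against $U_k$. This gives
\begin{equation*}
Q_{D_{n/w}}(\lambda)=\prod_{i=1}^{n/w}Q_{D^{(i)}}(\lambda).
\end{equation*}

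Third, each $D^{(i)}$ is exactly the code generated by the all-one vector of $\mathbb F_2^{U_i}$, with $|U_i|=w$. Applying Lemma~\ref{lem: base case} with block length $w$ in place of $n$ gives
\begin{equation*}
Q_{D^{(i)}}(\lambda)=\frac12\sum_{t=0}^w\binom wt\lambda^{\min\{t,w-t\}}.
\end{equation*}
Since this does not depend on $i$, the product is its $(n/w)$-th power, which is the claimed formula.

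The only point needing care is the first step: checking that the span really is the internal direct sum under the paper's coordinate identification, so that Lemma~\ref{lem: decomposition} applies verbatim. The rest is bookkeeping.
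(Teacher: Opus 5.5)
Your proposal is correct and follows the paper's proof: the paper likewise writes $D_{n/w}$ as the direct sum of $n/w$ copies of the code generated by the all-one vector in $\mathbb F_2^w$, then applies Lemma~\ref{lem: decomposition} and Lemma~\ref{lem: base case}. Your explicit check of the direct-sum identification and the induction over blocks simply fill in details the paper leaves implicit.
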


\begin{proof}
    Since the supports of $\boldsymbol{b}_1,\ldots,\boldsymbol{b}_{n/w}$ are pairwise disjoint and partition $[n]$, the code $D_{n/w}$ is the direct sum of $n/w$ copies of the one-dimensional code generated by the all-one vector in $\mathbb F_2^w$. The result follows from Lemmas \ref{lem: decomposition} and \ref{lem: base case}.
\end{proof}

\section{Proof of Theorem~\ref{thm: main-thm}}\label{sec: proof main-thm}

\paragraph{Proof overview.} 
Suppose that $D$ is a code spanned by low-weight vectors. 
We reconstruct the code $D$ (or a subcode of $D$) by adding its low-weight generators one by one.
At each step, we add one of the remaining low-weight generators, which introduces $v$ new coordinates. 
Lemma~\ref{lem: extension} bounds the one-step growth of the coset-weight generating function by the maximum of a local factor $\varPhi_{v,\Delta}(\lambda)$ for $0 \le \Delta \le w-v$.
Hence, the proof reduces to upper-bounding the local factor $\varPhi_{v,\Delta}(\lambda)$ for $v+\Delta \le w$. 
We present a general bound in Lemma~\ref{lem: general factor}, and give sharper estimates for $w=3$ and $w=4$ in Lemmas~\ref{lem: w=3 factor} and~\ref{lem: w=4 factor}.
Finally, by multiplying the one-step growth over all generators, we obtain Theorem~\ref{thm: main-thm}.

\paragraph{Formal proof.} 
For nonnegative integers $v$ and $\Delta$, define the local factor

\begin{equation*}
    \varPhi_{v,\Delta}(\lambda)=\frac{\sum_{t=0}^v\binom{v}{t}\lambda^{\min\{t,~\Delta+v-t\}}}{1+\lambda^{\Delta}}.
\end{equation*}

The following lemma is the main technical ingredient of the proof. 
It shows that the above factor controls the one-step growth of $Q_D(\lambda)$.

\begin{lemma}\label{lem: extension}
Let $U \subseteq [n]$ and $D \leq \mathbb{F}_2^U$, and let $\boldsymbol{b} \in \mathbb{F}_2^n$ satisfy $w_{\mathrm H}(\boldsymbol{b}) \leq w$ and $\supp(\boldsymbol{b}) \not\subseteq U$. Define $V = \supp(\boldsymbol{b}) \setminus U$, and set
$E= \langle D \times \{\boldsymbol{0}_V\},~ (\boldsymbol{b}_U,\boldsymbol{b}_V)\rangle \leq \mathbb{F}_2^U \times \mathbb{F}_2^V$. Here $(\boldsymbol{b}_U,\boldsymbol{b}_V)$ is the restriction of $\boldsymbol{b}$ to $U\cup V$.
Then, for every $0 < \lambda \leq 1$,
\begin{equation*}
    \frac{Q_E(\lambda)}{Q_D(\lambda)}\leq\max_{0\leq\Delta\leq w-|V|}\varPhi_{|V|,\Delta}(\lambda).
\end{equation*}
\end{lemma}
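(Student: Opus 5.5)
The plan is to compute $Q_E$ exactly in terms of the $D$-cosets, pair those cosets up using the extra generator, and compare the two generating functions pair by pair. Set $v=|V|$, so $v\ge1$ because $\supp(\boldsymbol b)\not\subseteq U$. Since $V=\supp(\boldsymbol b)\setminus U$, we have $\boldsymbol b_V=\boldsymbol 1_V$, and also $w_{\mathrm H}(\boldsymbol b_U)\le w-v$. Because $\boldsymbol b_V\neq\boldsymbol 0$, the vector $(\boldsymbol b_U,\boldsymbol b_V)$ is not in $D\times\{\boldsymbol 0_V\}$, so $D\times\{\boldsymbol 0_V\}$ has index $2$ in $E$.

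\textbf{Step 1: write $Q_E$ as a sum over $D$-cosets.} The cosets of $D\times\{\boldsymbol 0_V\}$ in $\mathbb F_2^U\times\mathbb F_2^V$ are the sets $A\times\{\boldsymbol y\}$, with $A\in\mathbb F_2^U/D$ and $\boldsymbol y\in\mathbb F_2^V$. Such a coset has weight $w_{\mathrm H}(A)+|\boldsymbol y|$. Each $E$-coset is the union of exactly two of these, namely $A\times\{\boldsymbol y\}$ and $(A+\boldsymbol b_U)\times\{\boldsymbol y+\boldsymbol 1_V\}$. Writing $a=w_{\mathrm H}(A)$, $a'=w_{\mathrm H}(A+\boldsymbol b_U)$ and $t=|\boldsymbol y|$, this gives
\[
Q_E(\lambda)=\frac12\sum_{A\in\mathbb F_2^U/D}\ \sum_{t=0}^{v}\binom vt\lambda^{\min\{a+t,\;a'+v-t\}}.
\]
This is the same double-counting trick as in Lemma~\ref{lem: base case}.

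\textbf{Step 2: group the $D$-cosets into orbits.} The map $A\mapsto A+\boldsymbol b_U$ is an involution on $\mathbb F_2^U/D$, so the $D$-cosets split into orbits of size $1$ or $2$. By the triangle inequality, $|a-a'|\le w_{\mathrm H}(\boldsymbol b_U)\le w-v$.

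Take an orbit $\{A,A'\}$ of size $2$ with $a\le a'$, and put $\Delta=a'-a$. The inner sum for $A'$ becomes the inner sum for $A$ after the substitution $t\mapsto v-t$. Hence the orbit contributes exactly
\[
\lambda^a\sum_{t=0}^{v}\binom vt\lambda^{\min\{t,\;\Delta+v-t\}}
\]
to $Q_E$, while it contributes $\lambda^a+\lambda^{a'}=\lambda^a(1+\lambda^\Delta)$ to $Q_D$. The ratio of the two contributions is therefore exactly $\varPhi_{v,\Delta}(\lambda)$.

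A fixed point $A=A'$ occurs when $\boldsymbol b_U\in D$. It contributes $\tfrac12\lambda^a\sum_t\binom vt\lambda^{\min\{t,v-t\}}$ to $Q_E$ and $\lambda^a$ to $Q_D$. The ratio is again $\varPhi_{v,0}(\lambda)$, since $1+\lambda^0=2$.

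\textbf{Step 3: conclude by the mediant inequality.} Both $Q_D$ and $Q_E$ are sums of positive orbit contributions. Each orbit's ratio equals $\varPhi_{v,\Delta}(\lambda)$ for some $0\le\Delta\le w-v$. A ratio of sums of positive terms is at most the largest individual ratio, which gives the claimed bound.

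The main point needing care is Step 1: one must verify that every $E$-coset is counted exactly twice and that the weight of an $E$-coset is the minimum over its two constituent $D$-cosets. The remaining delicate points are the symmetry $t\mapsto v-t$ that merges the two halves of each orbit, and the fixed-point case; I expect these to be routine once the bookkeeping is set up as above.
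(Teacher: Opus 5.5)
Your proof is correct and follows essentially the same route as the paper: pair the $D$-cosets under translation by the $U$-part of the new generator, show that each pair contributes exactly $\varPhi_{|V|,\Delta}(\lambda)$ times its contribution to $Q_D$ (with $0\le\Delta\le w-|V|$ by the triangle inequality), and conclude with the mediant inequality. The differences are only bookkeeping. The paper first replaces $\boldsymbol b_U$ by a minimum-weight $\boldsymbol z\in\boldsymbol b_U+D$, treats $\boldsymbol z=\boldsymbol 0$ as a separate direct-sum case, and lists the $E$-cosets explicitly as $B_{A,\boldsymbol y}$ with one representative $A$ per pair; your double count over all pairs $(A,\boldsymbol y)$ with the substitution $t\mapsto |V|-t$ instead absorbs the case $\boldsymbol b_U\in D$ as fixed points of the involution.
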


\begin{proof}
Observe that $\boldsymbol{b}_V=\boldsymbol{1}_V$. 
Choose a minimum-weight vector $\boldsymbol{z}\in \boldsymbol{b}_U+D$. 
Since $\boldsymbol{z}-\boldsymbol{b}_U\in D$, we have $(\boldsymbol{z},\boldsymbol{1}_V)-(\boldsymbol{b}_U,\boldsymbol{b}_V)\in D\times\{\boldsymbol{0}_V\}$. 
Hence replacing $(\boldsymbol{b}_U,\boldsymbol{b}_V)$ by $(\boldsymbol{z},\boldsymbol{1}_V)$ does not change $E$, so
$E=\langle D\times\{\boldsymbol{0}_V\},(\boldsymbol{z}, \boldsymbol{1}_V)\rangle$. 
By the minimality of $\boldsymbol{z}$, we have $w_{\mathrm H}(\boldsymbol{z})\le w_{\mathrm H}(\boldsymbol{b}_U)\le w-|V|$.

If $\boldsymbol{z}=\boldsymbol{0}$, then $E=D\oplus\langle\boldsymbol{1}_V\rangle$. 
By Lemmas~\ref{lem: decomposition} and~\ref{lem: base case}, we obtain $\frac{Q_E(\lambda)}{Q_D(\lambda)}=Q_{\langle\boldsymbol{1}_V\rangle}(\lambda)=\varPhi_{|V|,0}(\lambda)$.

If $\boldsymbol{z}\neq \boldsymbol{0}$, then by minimality we have $\boldsymbol{z}\notin D$. 
Translation by $\boldsymbol{z}$ partitions the $D$-cosets into pairs $\{A,A+\boldsymbol{z}\}$. 
From each pair, choose a coset of minimum weight, breaking ties arbitrarily, and let $\mathcal L_{\boldsymbol{z}}$ denote the collection of chosen cosets, i.e.,
\begin{equation*}
    \mathbb{F}_2^U/D=\bigcup_{A\in\mathcal L_{\boldsymbol{z}}}\{A,A+\boldsymbol{z}\}.
\end{equation*}
For $A\in\mathcal L_{\boldsymbol{z}}$, define 
$\Delta_A=w_{\mathrm H}(A+\boldsymbol{z})-w_{\mathrm H}(A)$. 
Choosing $\boldsymbol a\in A$ with $w_{\mathrm H}(\boldsymbol a)=w_{\mathrm H}(A)$, the triangle inequality gives
$\Delta_A\le w_{\mathrm H}(\boldsymbol{a}+\boldsymbol z)-w_{\mathrm H}(\boldsymbol a)\le w_{\mathrm H}(\boldsymbol z)\le w-|V|$.

Note that
\begin{equation*}
Q_D(\lambda)
=
\sum_{A\in\mathcal L_{\boldsymbol z}}
(\lambda^{w_{\mathrm H}(A)}+\lambda^{\wt(A+\boldsymbol z)})
=
\sum_{A\in\mathcal L_{\boldsymbol z}}\lambda^{\wt(A)}(1+\lambda^{\Delta_A}).
\end{equation*}

We next describe the $E$-cosets. For each $(\boldsymbol{x},\boldsymbol{y})\in\mathbb{F}_2^U\times\mathbb{F}_2^{V}$, we have
\begin{equation*}
    (\boldsymbol{x},\boldsymbol{y})+E=\left((\boldsymbol{x}+D)\times\{\boldsymbol{y}\}\right)\cup\left((\boldsymbol{x}+D+\boldsymbol{z})\times\{\boldsymbol{y}+\boldsymbol{1}_V\}\right).
\end{equation*}

\noindent As $(\boldsymbol{x},\boldsymbol{y})$ ranges over $\mathbb{F}_2^U\times\mathbb{F}_2^{V}$, it is straightforward to verify that the $E$-cosets are precisely
$$(\mathbb{F}_2^U\times\mathbb{F}_2^{V})/E=\{B_{A,\boldsymbol{y}}:A\in\mathcal L_{\boldsymbol z},y\in\mathbb F_2^V\},$$ 
where
\begin{equation*}
B_{A,\boldsymbol{y}} = (A\times\{\boldsymbol{y}\}) \cup ((A+\boldsymbol{z})\times\{\boldsymbol{y}+\boldsymbol{1}_V\}).
\end{equation*}
Indeed, an easy way to see this is to observe that $B_{A,\boldsymbol{y}}$ is an $E$-coset and check that $B_{A,\boldsymbol{y}}\neq B_{A',\boldsymbol{y'}}$ for $(A,\boldsymbol{y})\neq(A',\boldsymbol{y'})$, and the number of pairs $(A,\boldsymbol{y})$ with $A\in\mathcal L_{\boldsymbol z}$ and $\boldsymbol{y}\in\mathbb F_2^V$ equals $\frac12 \cdot |\mathbb F_2^U/D|\cdot 2^{|V|}$, which is exactly the number of $E$-cosets.

Therefore,

\begin{equation*}
\begin{split}
    Q_E(\lambda)
    &=\sum_{A\in\mathcal L_{\boldsymbol z}}\sum_{\boldsymbol{y}\in\mathbb F_2^V}\lambda^{\wt(B_{A,\boldsymbol{y}})}\\
    &=\sum_{A\in\mathcal L_{\boldsymbol z}}\sum_{\boldsymbol{y}\in\mathbb F_2^V}\lambda^{\min\{\wt(A)+\wt(\boldsymbol{y}),~\wt(A+\boldsymbol{z})+|V|-\wt(\boldsymbol{y})\}}\\
    &=\sum_{A\in\mathcal L_{\boldsymbol z}}\lambda^{\wt(A)}\sum_{t=0}^{|V|}\binom{|V|}{t}\lambda^{\min\{t,~\Delta_A+|V|-t\}}\\
    &=\sum_{A\in\mathcal L_{\boldsymbol z}}\lambda^{\wt(A)}(1+\lambda^{\Delta_A})\varPhi_{|V|,\Delta_A}(\lambda).
\end{split}
\end{equation*}

Since $0\le \Delta_A\le w-|V|$ for all $A\in\mathcal L_{\boldsymbol z}$, we conclude that
\begin{equation*}
\frac{Q_E(\lambda)}{Q_D(\lambda)}
=
\frac{\sum_{A\in\mathcal L_{\boldsymbol z}}\lambda^{\wt(A)}(1+\lambda^{\Delta_A})\varPhi_{|V|,\Delta_A}(\lambda)}
{\sum_{A\in\mathcal L_{\boldsymbol z}}\lambda^{\wt(A)}(1+\lambda^{\Delta_A})}
\le \max_{0\le \Delta\le w-|V|}\varPhi_{|V|,\Delta}(\lambda).
\end{equation*}
\end{proof}

The next lemma presents a general upper bound on the local factor $\varPhi_{v,\Delta}(\lambda)$ for $v+\Delta\le w$.

\begin{lemma}\label{lem: general factor}
Let $w$ be a positive integer and let $v,\Delta$ be nonnegative integers with $v+\Delta\le w$. Then, for every $0<\lambda\le 1$,
\begin{equation*}
    \varPhi_{v,\Delta}(\lambda)\le \left(\frac{(1+\lambda)^w}{1+\lambda^w}\right)^{v/w}.
\end{equation*}
\end{lemma}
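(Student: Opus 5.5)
The plan is to reduce to the extremal case $v+\Delta=w$ and then interpolate in $v$ between the endpoints $v=0$ and $v=w$. Write $f(\lambda)=\frac{(1+\lambda)^w}{1+\lambda^w}$ for the target base, and $N_{v,m}(\lambda)=\sum_{t=0}^{v}\binom{v}{t}\lambda^{\min\{t,m-t\}}$ with $m=v+\Delta$, so that $\varPhi_{v,\Delta}=N_{v,m}/(1+\lambda^{\Delta})$. Throughout, $0<\lambda\le1$ is fixed.

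\textbf{Step 1 (monotonicity in $\Delta$).} I will show that for fixed $v$, $\varPhi_{v,\Delta}(\lambda)$ is nondecreasing in $\Delta$. This would let us assume $\Delta=w-v$. For $v=1$ it is immediate: $\varPhi_{1,0}=\varPhi_{1,1}=1$ and $\varPhi_{1,\Delta}=(1+\lambda)/(1+\lambda^\Delta)$ for $\Delta\ge1$. In general I would compare $\varPhi_{v,\Delta+1}$ with $\varPhi_{v,\Delta}$ term by term. Increasing $\Delta$ by one raises $\min\{t,\Delta+v-t\}$ by one exactly for $t>(\Delta+v)/2$, so the numerator drops by a factor at most $\lambda$ on the upper half of the terms. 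Meanwhile the denominator drops from $1+\lambda^\Delta$ to $1+\lambda^{\Delta+1}$. I would then show by cross-multiplying that the denominator loss dominates. If this monotonicity turns out to be false in some corner, I would instead carry $\Delta$ as a free parameter $\le w-v$ into Step 3.

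\textbf{Step 2 (endpoint $v=w$, $\Delta=0$).} Here $\varPhi_{w,0}=\frac12\sum_t\binom wt\lambda^{\min\{t,w-t\}}$, and the claim is $\varPhi_{w,0}\le f(\lambda)$. Pair the terms $t$ and $w-t$, and let $a=\min\{t,w-t\}$, $b=\max\{t,w-t\}$. Then the needed inequality $(1+\lambda^w)\sum_t\binom wt\lambda^{\min\{t,w-t\}}\le 2(1+\lambda)^w$ reduces pairwise to
\[
\lambda^a+\lambda^{a+w}\le\lambda^a+\lambda^b ,
\]
which holds since $a+w\ge b$ and $\lambda\le1$. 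This also explains why equality holds for $w=3$ (Lemma~\ref{lem: optimal case}). At the other endpoint $v=0$ we have $\varPhi_{0,w}=1/(1+\lambda^w)\le 1=f^{0}$.

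\textbf{Step 3 (interpolation; the main obstacle).} With $\Delta=w-v$, set
\[
h(v)=\log N_{v,w}(\lambda)-\log(1+\lambda^{w-v}).
\]
The goal is $h(v)\le \frac vw\log f$. Given Step 2 and $h(0)\le0$, it suffices to show that $h$ is convex on $\{0,1,\dots,w\}$, i.e.\ that the discrete second differences are nonnegative. The term $-\log(1+\lambda^{w-v})$ is convex in $v$, since $\log(1+e^{s})$ is convex in $s$. So the work is to show that $N_{v,w}$ is log-convex in $v$. Note that $N_{v,w}$ is a truncated-binomial sum: it equals $(1+\lambda)^v$ minus a correction that appears only once $v>w/2$. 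This step is where I expect the difficulty, because truncated binomial sums are not obviously log-convex. If a direct second-difference computation fails, I would instead run an induction on $v$ using $\binom{v+1}{t}=\binom vt+\binom v{t-1}$. That gives a recursion $N_{v+1,w}=(1+\lambda)N_{v,w}-(\text{boundary terms})$, and I would try to prove the ratio bound $\varPhi_{v+1,w-v-1}/\varPhi_{v,w-v}\le f^{1/w}$, or at least that these successive ratios are nondecreasing, which is again convexity.
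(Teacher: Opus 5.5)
Your Steps 1 and 2 are fine. Step 1 is only sketched, but it is true: cross-multiplying reduces it to $\sum_{t\le (v+\Delta)/2}\binom vt\lambda^{t+\Delta}\ge\sum_{s<(v-\Delta)/2}\binom vs\lambda^{s+\Delta}$, which holds termwise. Step 3, however, has a genuine gap, and both routes you propose for it are false.

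First, the sign is backwards. The map $v\mapsto\log(1+\lambda^{w-v})=\log\bigl(1+e^{(v-w)\ln(1/\lambda)}\bigr)$ is convex, so $-\log(1+\lambda^{w-v})$ is \emph{concave} and works against convexity of $h$. And $h$ really is not convex: for $w=3$,
\[
h(0)-2h(1)+h(2)=\log\frac{(1+3\lambda)(1+\lambda^2)^2}{(1+\lambda^3)(1+\lambda)^3}.
\]
Here the numerator minus the denominator equals $-\lambda^2(1-4\lambda+2\lambda^2+\lambda^4)$, which is negative for small $\lambda$ (e.g.\ $\lambda=0.1$).

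The fallback fails as well. We have $\varPhi_{3,0}/\varPhi_{2,1}=1+\lambda>(1+\lambda)/(1+\lambda^3)^{1/3}=f^{1/3}$ for every $0<\lambda\le1$, so the per-step ratio bound is false. ``Nondecreasing ratios'' is the same false convexity.

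Separately, your remark about equality for $w=3$ is incorrect. In fact $f(\lambda)-(1+3\lambda)=3\lambda^2(1-\lambda^2)/(1+\lambda^3)>0$ for $0<\lambda<1$. The sharp $w=3$ bound comes from Lemma~\ref{lem: w=3 factor}, not from this lemma.

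The missing observation is that no interpolation is needed: the termwise inequality from your Step 2 works for every $(v,\Delta)$. Let $m=v+\Delta$ and $0\le t\le v$. Then $(1+\lambda^m)\lambda^{\min\{t,m-t\}}\le\lambda^t+\lambda^{m-t}$, because $\min\{t,m-t\}+m\ge\max\{t,m-t\}$. Multiplying by $\binom vt$ and summing over $t$ gives $(1+\lambda^m)N_{v,m}\le(1+\lambda^\Delta)(1+\lambda)^v$. Hence $\varPhi_{v,\Delta}\le(1+\lambda)^v/(1+\lambda^{v+\Delta})\le(1+\lambda)^v/(1+\lambda^w)$.

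The denominator is then handled crudely by $(1+\lambda^w)^{v/w}\le1+\lambda^w$, which holds since $1+\lambda^w\ge1$ and $v/w\le1$. This gives the lemma and is exactly the paper's proof; it makes Step 1 and any convexity argument unnecessary.
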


\begin{proof}
Observe that for each $0\le t\le v$, 
\begin{equation*}
(1+\lambda^{\Delta+v})\lambda^{\min\{t,\Delta+v-t\}}
\le \lambda^t+\lambda^{\Delta+v-t}.
\end{equation*}
Multiplying by $\binom vt$ and summing over $t=0,\dots,v$ gives
\begin{equation*}
(1+\lambda^{\Delta+v})\sum_{t=0}^v \binom vt \lambda^{\min\{t,\Delta+v-t\}}
\le \sum_{t=0}^v \binom vt \lambda^t + \sum_{t=0}^v \binom vt \lambda^{\Delta+v-t}
= (1+\lambda^\Delta)(1+\lambda)^v.
\end{equation*}
Hence $\varPhi_{v,\Delta}(\lambda)\le (1+\lambda)^v/(1+\lambda^{\Delta+v})$.

Since $\Delta+v\le w$ and $0<\lambda\le1$, we have $1+\lambda^{\Delta+v}\ge 1+\lambda^w$. Also, $(1+\lambda^w)^{v/w}\le 1+\lambda^w$. Therefore,
\begin{equation*}
\varPhi_{v,\Delta}(\lambda)
\le \frac{(1+\lambda)^v}{1+\lambda^w}
\le \frac{(1+\lambda)^v}{(1+\lambda^w)^{v/w}}
= \left(\frac{(1+\lambda)^w}{1+\lambda^w}\right)^{v/w}.
\end{equation*}
\end{proof}

By a finer case analysis, one obtains better upper bounds for $\varPhi_{v,\Delta}(\lambda)$ when $w\in\{3,4\}$. The proofs are postponed to the appendix.

\begin{lemma}\label{lem: w=3 factor}
Let $v,\Delta$ be nonnegative integers with $v+\Delta\le 3$. Then, for every $0<\lambda\le 1$,
\begin{equation*}
    \varPhi_{v,\Delta}(\lambda)\le (1+3\lambda)^{v/3}.
\end{equation*}
\end{lemma}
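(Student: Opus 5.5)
The statement involves only finitely many pairs $(v,\Delta)$, namely those with $v+\Delta\le 3$. So I would simply enumerate them, compute $\varPhi_{v,\Delta}(\lambda)$ explicitly in each case, and compare it with $(1+3\lambda)^{v/3}$ on $(0,1]$.

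\textbf{Trivial cases.} If $v=0$, then $\varPhi_{0,\Delta}(\lambda)=\lambda^{0}/(1+\lambda^\Delta)\le 1$, since the numerator has only the term $t=0$, giving $\lambda^{\min\{0,\Delta\}}=1$.

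\textbf{Case $v=1$.} Here $\Delta\in\{0,1,2\}$ and the numerator is $1+\lambda^{\min\{1,\Delta\}}$:
\begin{itemize}
\item $\Delta=0$ gives $\varPhi=1$;
\item $\Delta=1$ gives $(1+\lambda)/(1+\lambda)=1$;
\item $\Delta=2$ gives $(1+\lambda)/(1+\lambda^2)$.
\end{itemize}
We need $(1+\lambda)^3\le (1+3\lambda)(1+\lambda^2)^3$. Expanding, the left side is $1+3\lambda+3\lambda^2+\lambda^3$. The right side is $(1+3\lambda)(1+3\lambda^2+3\lambda^4+\lambda^6)=1+3\lambda+3\lambda^2+9\lambda^3+\dots$, which already dominates because all its coefficients are nonnegative.

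\textbf{Case $v=2$.} Here $\Delta\in\{0,1\}$.
\begin{itemize}
\item $\Delta=0$: the numerator is $1+2\lambda+1$ (the $t=2$ term is $\lambda^{\min\{2,0\}}=1$), so $\varPhi=(2+2\lambda)/2=1+\lambda$. We need $(1+\lambda)^3\le(1+3\lambda)^2$, i.e.\ $1+3\lambda+3\lambda^2+\lambda^3\le 1+6\lambda+9\lambda^2$. This holds for $\lambda\le1$.
\item $\Delta=1$: the numerator is $1+2\lambda+\lambda$, so $\varPhi=(1+3\lambda)/(1+\lambda)$. We need $(1+3\lambda)^3\le(1+3\lambda)^2(1+\lambda)^3$, i.e.\ $1+3\lambda\le(1+\lambda)^3$, which is true.
\end{itemize}

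\textbf{Case $v=3$.} Here $\Delta=0$, and the numerator is $1+3\lambda+3\lambda^{\min\{2,1\}}+\lambda^{0}=2+6\lambda$. Hence $\varPhi=1+3\lambda$, with equality in the claimed bound; this is the tight case matching Lemma~\ref{lem: optimal case}.

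No step is a genuine obstacle. The only point needing care is the $\min$ in the exponent when $\Delta$ is small. The slackest-looking cases are $(v,\Delta)=(1,2)$ and $(2,0)$, where one raises both sides to the third power and compares polynomials coefficientwise using $0<\lambda\le1$.
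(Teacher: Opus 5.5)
Your proof is correct and follows essentially the same route as the paper: enumerate the pairs $(v,\Delta)$, compute $\varPhi_{v,\Delta}$ explicitly, and reduce the nontrivial cases $(1,2)$, $(2,0)$, $(2,1)$ to elementary polynomial inequalities on $(0,1]$. One small imprecision in your closing remark: the $(2,0)$ case is not a purely coefficientwise comparison, since the left side has a $\lambda^3$ term that the right side lacks. What you actually need there is $\lambda(3+6\lambda-\lambda^2)\ge 0$, and that is exactly where $\lambda\le 1$ enters.
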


\begin{lemma}\label{lem: w=4 factor}
Let $v,\Delta$ be nonnegative integers with $v+\Delta\le 4$. Then, for every $0<\lambda\le 1$,
\begin{equation*}
    \varPhi_{v,\Delta}(\lambda)\le (1+4\lambda+6\lambda^2)^{v/4}.
\end{equation*}
\end{lemma}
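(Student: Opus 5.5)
The plan is a direct finite case analysis. Since $v+\Delta\le 4$, there are only finitely many pairs $(v,\Delta)$. For each pair I will compute $\varPhi_{v,\Delta}(\lambda)$ as an explicit rational function. I will then verify $\varPhi_{v,\Delta}(\lambda)^4\le G(\lambda)^v$ on $(0,1]$, where $G(\lambda)=1+4\lambda+6\lambda^2$, so that every check is a polynomial inequality.

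The case $v=0$ is immediate: $\varPhi_{0,\Delta}(\lambda)=1/(1+\lambda^\Delta)\le 1$.

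For $v\ge 1$ I compute the numerators $\sum_t\binom vt\lambda^{\min\{t,\Delta+v-t\}}$ directly.
\begin{itemize}
\item $v=4$: the only case is $\Delta=0$, giving $\varPhi_{4,0}=(2+8\lambda+6\lambda^2)/2=1+4\lambda+3\lambda^2\le G(\lambda)$.
\item $v=3$: both cases give the same value, $\varPhi_{3,0}=(2+6\lambda)/2=1+3\lambda$ and $\varPhi_{3,1}=(1+4\lambda+3\lambda^2)/(1+\lambda)=1+3\lambda$. So one needs $(1+3\lambda)^4\le G(\lambda)^3$.
\item $v=2$: the three values are $\varPhi_{2,0}=1+\lambda$, $\varPhi_{2,1}=(1+3\lambda)/(1+\lambda)$ and $\varPhi_{2,2}=(1+\lambda)^2/(1+\lambda^2)$. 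Each must satisfy $\varPhi^2\le G$.
\item $v=1$: the values are $\varPhi_{1,0}=1$ and $\varPhi_{1,\Delta}=(1+\lambda)/(1+\lambda^\Delta)$ for $\Delta\ge1$. This is maximized at $\Delta=3$, where it equals $1/(1-\lambda+\lambda^2)$. So one needs $G(\lambda)(1-\lambda+\lambda^2)^4\ge 1$.
\end{itemize}

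Several of these checks are routine.
\begin{itemize}
\item For $v=2$: $(1+\lambda)^2=1+2\lambda+\lambda^2\le G$ trivially. Also $(1+3\lambda)^2\le(1+\lambda)^2G$ because $(1+\lambda)^2G\ge G\ge 1+4\lambda+6\lambda^2\ge(1+3\lambda)^2$ fails only if $9\lambda^2>6\lambda^2+2\lambda$… so instead I will compare directly: $(1+\lambda)^2G=1+6\lambda+15\lambda^2+\dots$ dominates $1+6\lambda+9\lambda^2$ coefficientwise. Finally $(1+\lambda)^4\le G(1+\lambda^2)^2$ follows from $G(1+\lambda^2)^2\ge G+2\lambda^2G$, which already dominates $1+4\lambda+6\lambda^2+4\lambda^3+\lambda^4$ coefficientwise.
\item For $v=3$: expand both sides. $(1+3\lambda)^4=1+12\lambda+54\lambda^2+108\lambda^3+81\lambda^4$, whereas $G^3=1+12\lambda+66\lambda^2+\cdots$ with all coefficients nonnegative. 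I expect the higher coefficients of $G^3$ (for instance $\lambda^3$: $64+2\cdot 72=208$, and the $\lambda^4$ coefficient $\ge 81$) to dominate term by term, which I will verify explicitly.
\end{itemize}

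The main obstacle is the case $v=1$, $\Delta=3$, namely $(1+4\lambda+6\lambda^2)(1-\lambda+\lambda^2)^4\ge1$ on $[0,1]$. This one is not a coefficientwise comparison, because $(1-\lambda+\lambda^2)^4$ has negative coefficients. The two sides agree to first order at $\lambda=0$, since $(1-\lambda)^4(1+4\lambda)=1-10\lambda^2+\dots$ is offset by the $6\lambda^2$ term and the $\lambda^2$ contributions from $(1-\lambda+\lambda^2)^4$.

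I would first take logarithms and verify that $\log G+4\log(1-\lambda+\lambda^2)$ has nonnegative second-order term and is nonnegative on $[0,1]$. Failing a clean argument, I would expand the degree-10 polynomial $P(\lambda)=G(1-\lambda+\lambda^2)^4-1$, factor out $\lambda^2$, and show the quotient is positive on $[0,1]$. For that I would use a short Sturm/interval check, or group terms as sums of squares, using $1-\lambda+\lambda^2\ge\frac34$ to control the region near $\lambda=\tfrac12$.

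Once all cases are verified, the maximum over $\Delta$ is bounded as claimed, which proves the lemma.
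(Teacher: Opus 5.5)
Your case analysis is the same as the paper's, and your closed forms for every $\varPhi_{v,\Delta}$ are correct. The checks for $v=2,3,4$ also go through. For $v=3$, $G^3-(1+3\lambda)^4=12\lambda^2+100\lambda^3+315\lambda^4+432\lambda^5+216\lambda^6$, so the coefficientwise domination you expect does hold. In the $v=2$ bullet, delete the aborted chain: $G<(1+3\lambda)^2$ for every $\lambda>0$, and the direct comparison you give afterwards is the valid one.

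The gap is the case you single out yourself, $(v,\Delta)=(1,3)$. You never prove $G(\lambda)(1-\lambda+\lambda^2)^4\ge 1$; you only list strategies, and these underestimate the difficulty. The two sides do not merely agree to first order. The polynomial $P(\lambda)=G(\lambda)(1-\lambda+\lambda^2)^4-1$ vanishes to order four at $0$: its $\lambda,\lambda^2,\lambda^3$ coefficients are all zero and its $\lambda^4$ coefficient is $15$. This has three consequences for your plan:
\begin{itemize}
\item The second-order term of $\log G+4\log(1-\lambda+\lambda^2)$ is exactly $0$, so checking it proves nothing.
\item After factoring out $\lambda^2$, the quotient $P(\lambda)/\lambda^2$ still vanishes at $0$.
\item The bound $1-\lambda+\lambda^2\ge\frac34$ only settles $\lambda\ge 0.36$, away from where the difficulty lies.
\end{itemize}

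The missing observation is that this difficulty comes from cancelling the factor $1+\lambda$, which introduces the negative coefficients. Keep $\varPhi_{1,3}=(1+\lambda)/(1+\lambda^3)$ and clear denominators as $(1+\lambda)^4\le G(\lambda)(1+\lambda^3)^4$. Then
\begin{equation*}
G(\lambda)(1+\lambda^3)^4-(1+\lambda)^4=\lambda^4\bigl(15+24\lambda+6\lambda^2+24\lambda^3+36\lambda^4+4\lambda^5+16\lambda^6+24\lambda^7+\lambda^8+4\lambda^9+6\lambda^{10}\bigr),
\end{equation*}
a polynomial with nonnegative coefficients. So this case is also a coefficientwise comparison, and this is exactly how the paper handles it. It is your $P$ in disguise: $P(\lambda)=\lambda^4(15+24\lambda+\cdots)/(1+\lambda)^4$. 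With this line added your proof is complete; without it, the hardest case is only a plan.
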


We are now ready to prove Theorem \ref{thm: main-thm}.
\begin{proof}[\textbf{Proof of Theorem~\ref{thm: main-thm}}]
The case $\lambda=0$ is immediate, since $Q_D(0)=1$. 
Assume henceforth that $0<\lambda\leq1$.
We first prove the theorem for general $w$.
Recall that $D\le \mathbb{F}_2^n$ is spanned by vectors of weight at most $w$ and satisfies $\bigcup_{\boldsymbol{c}\in D}\supp(\boldsymbol{c})=[n]$.
Choose $$\boldsymbol{b}^{(1)},\dots,\boldsymbol{b}^{(s)}\in D$$ as follows. 
First, choose $\boldsymbol{b}^{(1)}\in D$ to be any nonzero vector with $w_{\mathrm H}(\boldsymbol{b}^{(1)})\leq w$, and set $U_1 = \supp(\boldsymbol{b}^{(1)})$.
Suppose that $\boldsymbol{b}^{(1)},\dots,\boldsymbol{b}^{(i)}$ have been chosen. 
If $U_i=[n]$, stop. 
Otherwise, choose $\boldsymbol{b}^{(i+1)}\in D$ with $w_{\mathrm H}(\boldsymbol{b}^{(i+1)})\leq w$ such that $\supp(\boldsymbol{b}^{(i+1)})\nsubseteq U_i$, and set $U_{i+1}=U_i\cup\supp(\boldsymbol{b}^{(i+1)})$. 
Suppose the process stops at step $s$; then $U_s=[n]$.

For each $1\le i\le s$, let $$D_i=\langle \boldsymbol{b}^{(1)},\dots,\boldsymbol{b}^{(i)}\rangle|_{U_i}\leq \mathbb F_2^{U_i}.$$ 
Since $U_s=[n]$, we have $D_s=\langle \boldsymbol{b}^{(1)},\dots,\boldsymbol{b}^{(s)}\rangle\leq D$. 
By Lemma~\ref{lem: monotonicity}, $Q_D(\lambda)\leq Q_{D_s}(\lambda)$, so it remains to bound $Q_{D_s}(\lambda)$.

Set $V_1 = U_1$, and for $i\ge 2$ let $V_i = \supp(\boldsymbol{b}^{(i)}) \setminus U_{i-1}$. 
Clearly, $V_1,\dots,V_s$ are pairwise disjoint and $\bigcup_{i=1}^s V_i = [n]$. 
Thus, $\sum_{i=1}^s |V_i| = n$.

For $i=1$, we have $D_1=\langle \boldsymbol{1}_{V_1}\rangle\leq\mathbb{F}_2^{V_1}$. 
By Lemmas~\ref{lem: base case} and~\ref{lem: general factor},
\begin{equation}\label{eq: base-case}
    Q_{D_1}(\lambda)
    =
    \varPhi_{|V_1|,0}(\lambda)
    \leq
    \left(\frac{(1+\lambda)^w}{1+\lambda^w}\right)^{|V_1|/w}.
\end{equation}

For each $2\leq i\leq s$, identify $\mathbb F_2^{U_i}$ with 
$\mathbb F_2^{U_{i-1}}\times\mathbb F_2^{V_i}$. 
Then we have 
$$D_i=\langle D_{i-1}\times\{\boldsymbol{0}_{V_i}\},
(\boldsymbol{b}^{(i)}_{U_{i-1}},\boldsymbol{b}^{(i)}_{V_i})\rangle.$$ 
Therefore, by Lemmas~\ref{lem: extension} and~\ref{lem: general factor}, we obtain
\begin{equation}\label{eq: inductive-step}
    \frac{Q_{D_i}(\lambda)}{Q_{D_{i-1}}(\lambda)}
    \leq\max_{0\leq\Delta\leq w-|V_i|}\varPhi_{|V_i|,\Delta}(\lambda)\leq
    \left(\frac{(1+\lambda)^w}{1+\lambda^w}\right)^{|V_i|/w}.
\end{equation}

Multiplying the one-step bounds in \eqref{eq: base-case} and \eqref{eq: inductive-step} yields
\begin{equation*}
    Q_D(\lambda)\leq Q_{D_s}(\lambda)
    \leq \prod_{i=1}^s
    \left(\frac{(1+\lambda)^w}{1+\lambda^w}\right)^{|V_i|/w}
    =\left(\frac{(1+\lambda)^w}{1+\lambda^w}\right)^{\sum_{i=1}^s|V_i|/w}=
    \left(\frac{(1+\lambda)^w}{1+\lambda^w}\right)^{n/w}.
\end{equation*}

For $w=3$ and $w=4$, replacing Lemma~\ref{lem: general factor} with Lemmas~\ref{lem: w=3 factor} and \ref{lem: w=4 factor}, respectively, gives $Q_D(\lambda)\leq (1+3\lambda)^{n/3}$ (for $w=3$) and $Q_D(\lambda)\leq (1+4\lambda+6\lambda^2)^{n/4}$ (for $w=4$). We omit the details.
\end{proof}

\section{Proof of Theorem~\ref{thm: rete-estimate}}\label{sec: proof rate-estimate}

\noindent 
For each $n$, let $C\leq\mathbb{F}_2^n$ be an LDPC code of maximum cardinality with parity-check density at most $w$ and relative minimum distance at least $\delta$.
By applying Theorem~\ref{thm: main-thm} to $C^\perp$, and combining it with Proposition~\ref{prop: reduce}, we obtain
\begin{equation*}
    |C|\leq 2^{o(n)}\lambda^{-\rho n}\left(\frac{(1+\lambda)^w}{1+\lambda^w}\right)^{n/w},
\end{equation*}
where $\rho=\tfrac{1}{2}-\sqrt{\delta(1-\delta)}$. 
Since $0<\rho<1/2$, we choose $\lambda=\frac{\rho}{1-\rho}\in(0,1)$. This gives
\begin{equation*}
    |C|\leq 2^{o(n)}\left(\frac{\left(\frac{\rho}{1-\rho}\right)^{-\rho}\left(\frac{1}{1-\rho}\right)}{\left(1+\left(\frac{\rho}{1-\rho}\right)^w\right)^{1/w}}\right)^n=2^{o(n)}\left(\frac{2^{H(\rho)}}{\left(1+\left(\frac{\rho}{1-\rho}\right)^w\right)^{1/w}}\right)^n, 
\end{equation*}
and hence
\begin{equation*}
R_w(\delta)=\limsup_{n\to\infty}\frac{\log_2|C|}{n}\leq H(\rho)-\frac{1}{w}\log_2\left(1+\left(\frac{\rho}{1-\rho}\right)^w\right).    
\end{equation*}

We now prove the sharper bounds for $w=3$ and $w=4$.

\begin{itemize}
    \item For $w=3$, we have
\begin{equation*}
    |C|\leq 2^{o(n)}\lambda^{-\rho n}(1+3\lambda)^{n/3}.
\end{equation*}
If $\rho>\frac14$, we choose $\lambda=1$, which gives $|C|\leq 2^{\frac{2}{3}n+o(n)}$ and hence $R_3(\delta)\leq\frac{2}{3}$. 
If $\rho\leq\frac{1}{4}$, we choose $\lambda=\frac{\rho}{1-3\rho}$ and obtain
\begin{equation*}
R_3(\delta)=\limsup_{n\to\infty}\frac{\log_2|C|}{n}\leq -\rho\log_2\left(\frac{\rho}{1-3\rho}\right)+\frac{1}{3}\log_2\left(\frac{1}{1-3\rho}\right)=\frac{2}{3}H_4(3\rho).
\end{equation*}

\item For $w=4$, we have
\begin{equation*}
    |C|\leq 2^{o(n)}\lambda^{-\rho n}(1+4\lambda+6\lambda^2)^{n/4}.
\end{equation*}
Choosing $\lambda=\frac{\rho}{1-\rho}$, we get
\begin{equation*}
\begin{split}
R_4(\delta)
&=\limsup_{n\to\infty}\frac{\log_2|C|}{n}\\
&\leq -\rho\log_2\left(\frac{\rho}{1-\rho}\right)+\frac{1}{4}\log_2\left(1+4\left(\frac{\rho}{1-\rho}\right)+6\left(\frac{\rho}{1-\rho}\right)^2\right)\\
&= -\rho\log_2\left(\frac{\rho}{1-\rho}\right)-\log_2(1-\rho)+\frac{1}{4}\log_2\left((1-\rho)^4+4\rho(1-\rho)^3+6\rho^2(1-\rho)^2\right)\\
&=H(\rho)+\frac{1}{4}\log_2(F(\rho)).
\end{split}
\end{equation*}
\end{itemize}

\paragraph{Optimal parameter choices.}
Optimizing the general estimate above over $\lambda$ gives
\begin{equation*}
R_w(\delta)\leq
\inf_{0<\lambda\leq1}
\left\{
-\rho\log_2\lambda+\log_2(1+\lambda)
-\frac1w\log_2(1+\lambda^w)
\right\}.
\end{equation*}
The minimum occurs at $\lambda=1$ or at an interior point satisfying
\begin{equation*}
\rho=\frac{\lambda}{1+\lambda}
-\frac{\lambda^w}{1+\lambda^w}.
\end{equation*}
The special bound for $w=3$ is already optimized
in~\eqref{eq: w=3}.
For $w=4$, the special bound is minimized at
\begin{equation*}
\lambda_*=
\begin{cases}
\displaystyle
\frac{2\rho}{1-4\rho+\sqrt{1+4\rho-8\rho^2}},
& 0<\rho<4/11,\\[6pt]
1,
& 4/11\leq\rho<1/2.
\end{cases}
\end{equation*}
We use the smaller of the resulting bound and the optimized
general bound.

At $\delta=0.1$, optimization before shortening lowers the bounds
from $0.711982$ to $0.710587$ for $w=4$, and from $0.721646$
to $0.721642$ for $w=5$.
After shortening, with $t$ optimized separately for each bound,
the corresponding decreases are approximately
$1.02\times10^{-8}$ and $8.85\times10^{-15}$, respectively.

\section{Concluding Remarks}\label{sec: rmk}
\noindent We conclude this paper by mentioning a conjecture on the upper bound of $Q_D(\lambda)$.
Suppose that $w\mid n$, and recall the code $D_{n/w}$ defined in Lemma \ref{lem: optimal case}. 
We conjecture that, among all linear codes $D\le \mathbb F_2^n$ spanned by vectors of weight at most $w$ and satisfying $\bigcup_{\boldsymbol c\in D}\operatorname{supp}(\boldsymbol c)=[n]$, the quantity $Q_D(\lambda)$ is maximized by $D_{n/w}$.

\begin{conjecture}\label{conjecture}
Suppose that $w\mid n$.
Let $D\le \mathbb{F}_2^n$ be spanned by vectors of Hamming weight at most $w$, and suppose that
$\bigcup_{c\in D}\supp(c)=[n]$.
Then, for every $0\le \lambda\le 1$,
\begin{equation*}
    Q_D(\lambda)\leq \varPhi_{w,0}(\lambda)^{n/w}.
\end{equation*}
\end{conjecture}

Since $w$ is fixed and we are interested in asymptotic rate estimates, the divisibility assumption $w\mid n$ is expected to affect only lower-order terms.
Theorem~\ref{thm: main-thm} shows that the above conjecture is true for $w=3$.
For $w=4$, however, Theorem~\ref{thm: main-thm} gives
\begin{equation*}
    Q_D(\lambda)\leq (1+4\lambda+6\lambda^2)^{n/4},
\end{equation*}
whereas the conjectured bound is
\begin{equation*}
    Q_D(\lambda)\leq (1+4\lambda+3\lambda^2)^{n/4}.
\end{equation*}

Even if Conjecture~\ref{conjecture} holds, the extremal example
$D_{n/w}$ does not establish sharpness of the resulting rate bounds
at fixed positive relative distance, since its dual has minimum
distance $2$.
A natural direction is therefore to seek sharper bounds on
$Q_D(\lambda)$ when $D^\perp$ has relative minimum distance
bounded away from zero.

\section*{Acknowledgements}
\noindent We thank Zhicong Lin, Yuanting Shen, and Zihao Zhang for several discussions in the early stage of this project. We also thank Yiwei Zhang and Sihuang Hu for a helpful discussion in the late stage of this project.
The research of Chong Shangguan is supported by the National Natural Science Foundation of China under Grant Nos. 12571352 and 12231014, and the Fundamental Research Funds for the Central Universities.

\bibliographystyle{plain}
\normalem
\bibliography{ref}

\section*{Appendix}

\begin{proof}[\textbf{Proof of Lemma~\ref{lem: w=3 factor}}]
If $v=0$, then $\varPhi_{0,\Delta}(\lambda)=1/(1+\lambda^\Delta)\leq 1=(1+3\lambda)^0$. 
Assume $v\geq 1$. Since $v+\Delta\leq 3$, the possible pairs are
$(1,0),(1,1),(1,2),(2,0),(2,1),(3,0)$. 
A direct calculation gives
\begin{equation*}
\begin{array}{lll}
\varPhi_{1,0}=\varPhi_{1,1}=1, 
& \displaystyle \varPhi_{1,2}=\frac{1+\lambda}{1+\lambda^2},
& \varPhi_{2,0}=1+\lambda,\\[1.2ex]
\displaystyle \varPhi_{2,1}=\frac{1+3\lambda}{1+\lambda},
& \varPhi_{3,0}=1+3\lambda. &
\end{array}
\end{equation*}
The cases $(1,0),(1,1)$ and $(3,0)$ are immediate. 
It remains to check
\begin{equation*}
\frac{1+\lambda}{1+\lambda^2}\leq (1+3\lambda)^{1/3},\qquad
1+\lambda\leq (1+3\lambda)^{2/3},\qquad
\frac{1+3\lambda}{1+\lambda}\leq (1+3\lambda)^{2/3}.
\end{equation*}
After rearranging, it is straightforward to verify that
\begin{equation*}
\begin{array}{l}
(1+3\lambda)(1+\lambda^2)^3-(1+\lambda)^3
=\lambda^3(8+3\lambda+9\lambda^2+\lambda^3+3\lambda^4)\geq0,\\[0.8ex]
(1+3\lambda)^2-(1+\lambda)^3
=\lambda(3+6\lambda-\lambda^2)\geq0,\qquad
(1+\lambda)^3-(1+3\lambda)=\lambda^2(3+\lambda)\geq0.
\end{array}
\end{equation*}
Here the second inequality uses $0<\lambda\leq1$. This proves the lemma.
\end{proof}

\begin{proof}[\textbf{Proof of Lemma~\ref{lem: w=4 factor}}]
Write $B(\lambda):=1+4\lambda+6\lambda^2$. If $v=0$, then $\varPhi_{0,\Delta}(\lambda)\leq 1=B(\lambda)^0$. 
Assume $v\geq 1$. Since $v+\Delta\leq 4$, the possible pairs are
\begin{equation*}
(v,\Delta)\in\{(1,0),(1,1),(1,2),(1,3),(2,0),(2,1),(2,2),(3,0),(3,1),(4,0)\}.
\end{equation*}
A direct calculation gives
\begin{equation*}
\begin{array}{lll}
\varPhi_{1,0}=\varPhi_{1,1}=1,
& \displaystyle \varPhi_{1,2}=\frac{1+\lambda}{1+\lambda^2},
& \displaystyle \varPhi_{1,3}=\frac{1+\lambda}{1+\lambda^3},\\[1.2ex]
\varPhi_{2,0}=1+\lambda,
& \displaystyle \varPhi_{2,1}=\frac{1+3\lambda}{1+\lambda},
& \displaystyle \varPhi_{2,2}=\frac{(1+\lambda)^2}{1+\lambda^2},\\[1.2ex]
\varPhi_{3,0}=\varPhi_{3,1}=1+3\lambda,
& \varPhi_{4,0}=1+4\lambda+3\lambda^2. &
\end{array}
\end{equation*}
For $v=1$, it is enough to bound $\varPhi_{1,3}$, since
$\varPhi_{1,3}\geq \varPhi_{1,2}\geq \varPhi_{1,1}=\varPhi_{1,0}$. 
For $v=2$, it is enough to bound $\varPhi_{2,2}$, since
\begin{equation*}
\varPhi_{2,2}-\varPhi_{2,1}
=
\frac{2\lambda^2(1-\lambda)}{(1+\lambda)(1+\lambda^2)}
\geq0,
\qquad
\varPhi_{2,2}-\varPhi_{2,0}
=
\frac{\lambda(1-\lambda^2)}{1+\lambda^2}
\geq0.
\end{equation*}
Thus it remains to check
\begin{equation*}
\varPhi_{1,3}\leq B(\lambda)^{1/4},\qquad
\varPhi_{2,2}\leq B(\lambda)^{1/2},\qquad
\varPhi_{3,0}\leq B(\lambda)^{3/4},\qquad
\varPhi_{4,0}\leq B(\lambda).
\end{equation*}
After rearranging, the first three inequalities follow from
\begin{equation*}
\begin{array}{c}
\begin{aligned}
B(\lambda)(1+\lambda^3)^4-(1+\lambda)^4
&=\lambda^4\bigl(15+24\lambda+6\lambda^2+24\lambda^3+36\lambda^4+4\lambda^5\\
&\qquad\qquad\quad
+16\lambda^6+24\lambda^7+\lambda^8+4\lambda^9+6\lambda^{10}\bigr)\geq0,
\end{aligned}\\[1.2ex]
B(\lambda)(1+\lambda^2)^2-(1+\lambda)^4
=
2\lambda^2+4\lambda^3+12\lambda^4+4\lambda^5+6\lambda^6\geq0,\\[0.8ex]
B(\lambda)^3-(1+3\lambda)^4
=
12\lambda^2+100\lambda^3+315\lambda^4+432\lambda^5+216\lambda^6\geq0.
\end{array}
\end{equation*}
Finally, $\varPhi_{4,0}=1+4\lambda+3\lambda^2\leq B(\lambda)$. This proves the lemma.
\end{proof}

\end{document}